\newtheorem{theorem}{Theorem}[section]
\newtheorem{lemma}[theorem]{Lemma}
\newtheorem{claim}[theorem]{Claim}
\newtheorem{definition}[theorem]{Definition}
\title{Algorithmic aspects of disjunctive domination in graphs}
\author[1]{B. S. Panda\thanks{bspanda@maths.iitd.ac.in}}
\author[1]{Arti Pandey\thanks{artipandey2305@gmail.com}}
\author[2]{S. Paul\thanks{paulsubhabrata@gmail.com}}
\affil[1]{Department of Mathematics, Indian Institute of Technology Delhi\newline Hauz Khas, New Delhi 110016, INDIA}
\affil[2]{Advanced Computing and Microelectronics Unit, Indian Statistical Institute\newline Kolkata 700108, INDIA}
\begin{document}
\maketitle
\begin{abstract}
For a graph $G=(V,E)$, a set $D\subseteq V$ is called a \emph{disjunctive dominating set} of $G$ if for every vertex $v\in V\setminus D$, $v$ is either adjacent to a vertex of $D$ or has at least two vertices in $D$ at distance $2$ from it. The cardinality of a minimum disjunctive dominating set of $G$ is called the \emph{disjunctive domination number} of graph $G$, and is denoted by $\gamma_{2}^{d}(G)$. The \textsc{Minimum Disjunctive Domination Problem} (MDDP) is to find a disjunctive dominating set of cardinality $\gamma_{2}^{d}(G)$. Given a positive integer $k$ and a graph $G$, the \textsc{Disjunctive Domination Decision Problem} (DDDP) is to decide whether $G$ has a disjunctive dominating set of cardinality at most $k$. In this article, we first propose a linear time algorithm for MDDP in proper interval graphs. Next we tighten the NP-completeness of DDDP by showing that it remains NP-complete even in chordal graphs. We also propose a $(\ln(\Delta^{2}+\Delta+2)+1)$-approximation algorithm for MDDP in general graphs and prove that MDDP can not be approximated within $(1-\epsilon) \ln(|V|)$ for any $\epsilon>0$ unless NP $\subseteq$ DTIME$(|V|^{O(\log \log |V|)})$. Finally, we show that MDDP is APX-complete for bipartite graphs with maximum degree $3$.
\end{abstract}
\vspace*{.2cm}
 Keywords:  Domination, Chordal graph, Graph algorithm, Approximation algorithm, NP-complete, APX-complete.

\section{Introduction}
\label{sec:1}
Let $G=(V,E)$ be a graph. For a vertex $v\in V$, let $N_G(v)=\{u\in V |uv\in E\}$ and $N_G[v]=N_G(v)\cup \{v\}$ denote the open neighborhood and the closed neighborhood of $v$, respectively. For two distinct vertices $u,v\in V$, the distance $dist_G(u, v)$ between $u$ and $v$ is the length of a shortest path between $u$ and $v$. A vertex $u$ dominates $v$ if either $u=v$ or $u$ is adjacent to $v$. A set $D\subseteq V$ is called a \emph{dominating set} of  $G=(V,E)$ if each $v \in V$ is dominated by a vertex in $D$, that is, $|N_G[v] \cap D| \geq 1$ for all $v \in V$. The domination number of a graph $G$, denoted by $\gamma(G)$, is the minimum cardinality of a dominating set of $G$. For a graph $G$, the \textsc{Minimum Domination} problem is to find a dominating set of cardinality $\gamma(G)$. Domination in graphs is one of the classical problems in graph theory and it has been well studied form theoretical as well as algorithmic point of view \cite{Haynes1,Haynes2}. Over the years, many variants of domination problem have been studied in the literature due to its application in different fields
 varying from computer science to electrical engineering, operation research to network securities etc.
 The concept of \emph{disjunctive domination} is a recent and an interesting variation of domination \cite{goddard}.

In domination problem, our goal is to place minimum number of sentinels at some vertices of the graph so that all the remaining vertices are adjacent to at least one sentinel. In practice, depending upon the monitoring power, we can have different types of sentinels. To secure the graph with different types of sentinels, we need concept of different variants of domination.
Efforts made in this direction have given rise to different types of domination, such as, distance domination, exponential domination, secondary domination.
%If a sentinel can secure a vertex within a distance of $k$, then \emph{distance $k$-dominating set} is enough to secure the graph (see Chapter 12 of \cite{Haynes2}). A set $D_k\subseteq V$ is called a \emph{distance k-dominating set} of $G$ if every vertex $u\in V\setminus D_k$ is within a distance of $k$ from some vertex $v\in D_k$, i.e., $dist_G(u,v)\leq k$. Hedetniemi et al. \cite{Hedetniemi} have introduced another concept of domination, known as \emph{secondary domination}. A set $D'$ of vertices is called a \emph{$(1,k)$-dominating set} \cite{Hedetniemi} if for every vertex $u\in V\setminus D'$, there are two distinct vertices $v,w\in D'$ such that $u$ is dominated by $v$ and $w$ is within a distance of $k$ from $v$. The main feature of this type of domination is the existence of a `secondary' vertex within a distance of $k$.
%
In some cases, it might happen that the monitoring power of a sentinel is inversely proportional to the distance, that is, the domination power of a vertex reduces as the distance increases. Motivated by this idea,
%Dankelmann et al. \cite{dankelmann} have introduced the concept of \emph{exponential domination}. Recently,
Goddard et al. \cite{goddard} have introduced the concept of \emph{disjunctive domination} which captures the notion of decay in domination with increasing distance. A set $D_d\subseteq V$ is called a \emph{$b$-disjunctive dominating set} of $G$ if every vertex $v\in V\setminus D_d$ is either adjacent to a vertex in $D_d$ or there are at least $b$ vertices of $D_d$ within a distance of two from $v$. The minimum cardinality of a $b$-disjunctive dominating set of $G$ is called the \emph{$b$-disjunctive domination number} and it is denoted by $\gamma_b^d(G)$. A vertex $v$ is said
to be \emph{$b$-disjunctively dominated} by $D_{d}\subseteq V$ if either $v\in D_{d}$ or $v$ is adjacent to a vertex of $D_{d}$
or has at least $b$ vertices in $D_{d}$ at distance $2$ from it. Note that disjunctive domination is more general concept than distance two domination, since the parameter $\gamma_1^d(G)$ is the distance two domination number. For simplicity, $2$-disjunctive domination is called disjunctive domination. The disjunctive domination problem and its decision version are defined as follows:\\

\noindent\underline{\textsc{Minimum Disjunctive Domination Problem (MDDP)}}
\begin{description}
  \item[Instance:] A graph $G=(V,E)$.
  \item[Solution:] A disjunctive dominating set $D_d$ of $G$.
  \item[Measure:] Cardinality of the set $D_d$.
\end{description}
\noindent\underline{\textsc{Disjunctive Domination Decision Problem (DDDP)}}
\begin{description}
  \item[Instance:] A graph $G=(V,E)$ and a positive integer $k\leq |V|$.
  \item[Question:] Does there exist a disjunctive dominating set $D_d$ of $G$ such that $|D_d|\leq k$?
\end{description}

The concept of disjunctive domination has been introduced recently in 2014 \cite{goddard}
%. Unlike other variations of domination, it
%has not been well studied until now. As per our knowledge, only two papers have appeared on disjunctive domination
and further studied in \cite{Henning2014}. In \cite{goddard}, Goddard et al. have proven bounds on disjunctive domination number for specially regular graphs and claw-free graphs. They have shown that finding minimum $b$-disjunctive dominating set problem is NP-complete for planar and bipartite graphs and also designed a dynamic programming based linear time algorithm to find a minimum b-disjunctive dominating set in a tree. In \cite{Henning2014}, Henning et al. have studied the relation between domination number and disjunctive domination number of a tree $T$ and proved that $\gamma(T)\leq 2 \gamma_2^d(T)-1$. They have also given a constructive characterization of the trees achieving equality in this bound. On the other hand, a variation of disjunctive domination
is also studied in the literature (see \cite{total_disj}).

In this paper, our focus is on algorithmic study of disjunctive domination problem. The rest of the paper is organized as follows. In Section \ref{sec:preliminaries}, we give some pertinent definitions and notations that would be used in the rest of the paper. In this section, we also  observe some graph classes where domination problem is NP-complete but disjunctive domination can be easily solved and vice versa. This motivates us to study the status of the problem in other graph classes. In Section \ref{sec:polyalgo}, we design a linear time algorithm for disjunctive domination problem in proper interval graphs, an important subclass of chordal graphs. In Section \ref{sec:NPC}, we prove that DDDP remains NP-complete for chordal graphs. In Section \ref{sec:approx}, we design a polynomial time approximation algorithm for MDDP for general graph $G$ with approximation ratio $\ln(\Delta^{2}+\Delta+2)+1$, where $\Delta$ is the maximum degree of $G$. In this section, we also prove that MDDP can not be approximated within $(1-\epsilon) \ln(|V|)$ for any $\epsilon>0$ unless NP $\subseteq$ DTIME$(|V|^{O(\log \log |V|)})$.  In addition, for bipartite graphs with maximum degree $3$, MDDP is shown to be APX-complete in this section. Finally, Section \ref{sec:conclu} concludes the paper.

\section{Preliminaries}
\label{sec:preliminaries}

\subsection{Notations}
Let $G =(V, E)$ be a  graph. Let $N^{2}_{G}(v)$ denote the set of vertices which
 are at distance $2$ from the vertex $v$ in graph $G$. Let $G[S]$, $S \subseteq V$ denote the
induced subgraph of $G$ on the vertex set $S$.  The degree of a
vertex $v \in V(G)$, denoted by $d_G(v)$, is the number of neighbors
of $v$, that is, $d_G(v)=|N_G(v)|$. The \emph{minimum degree} and
\emph {maximum degree} of a graph $G$ is defined by
$\delta(G)=\min_{v\in V(G)} d_G(v)$ and $\Delta(G)=\max_{v\in V(G)}
d_G(v)$, respectively. A set $S\subseteq V$ is called an independent
set of a graph $G=(V,E)$ if $uv\notin E$ for all $u,v\in S$. A set
$K\subseteq V$ is called a clique of a graph $G=(V,E)$ if $uv\in E$
for all $u,v\in K$. A set $C\subseteq V$ is called a \emph{vertex
cover} of a graph $G=(V,E)$ if for each edge $ab\in E$, either $a\in
C$ or $b\in C$. Let $n$ and $m$ denote the
number of vertices and number of edges of $G$, respectively. In this paper,
we only consider connected graphs with at least two vertices.

\subsection{Graph Classes}
A graph $G$ is said to be a \emph{chordal graph} if every cycle in
$G$ of length at least four has a chord, that is, an edge joining two
non-consecutive vertices of the cycle.  Let $\mathscr{F}$ be a
family of sets. The intersection graph of $\mathscr{F}$ is obtained
by taking each set in $\mathscr{F}$ as a vertex and joining two sets
in $\mathscr{F}$ if and only if they have a non-empty intersection.
A graph $G$ is an \emph{interval graph} if $G$ is the intersection
graph of a family $\mathscr{F}$ of intervals on the real line.  A
graph  $G$ is called a \emph{proper interval} graph if it is the
intersection graph of a family $\mathscr{F}$ of intervals on the
real line such that no interval in $F$ contains another interval in
$\mathscr{F}$ set theoretically.  A vertex $v\in V(G)$ is a
\emph{simplicial} vertex of $G$ if $N_G[v]$ is a clique of $G$. An
ordering $\alpha=(v_1,v_2,...,v_n)$ is a {\it perfect elimination
ordering} (PEO) of $G$ if $v_i$ is a simplicial vertex of
$G_i=G[\{v_i,v_{i+1},...,v_n\}]$ for all $i$, $1\leq i\leq n$. A graph $G$ has a PEO if and only if $G$ is chordal~\cite{fulkerson}. A PEO
$\alpha=(v_1,v_2,\ldots,v_n)$ of a chordal graph  is a
\emph{bi-compatible elimination ordering} (BCO) if
$\alpha^{-1}=(v_n,v_{n-1},\ldots,v_1)$, i.e., the reverse of
$\alpha$, is also a PEO of $G$. A graph $G$ has a BCO if and only if $G$ is a proper interval graph~\cite{jamison}. A graph $G=(V,E)$ is called a \emph{split graph} if its vertex set, $V$, can be partitioned into two sets, say
$X$ and $Y$, such that $X$ is an independent set and $Y$ is a clique
of $G$.

\subsection{Domination vs disjunctive domination}

In this subsection, we make some observations on complexity difference of domination and disjunctive domination problem. It is known that domination problem is NP-complete for split graphs \cite{Bertossi} and for graphs with diameter two \cite{ambalath}. But disjunctive domination problem can be easily solved in these graph classes. Because, disjunctive domination number is at most $2$ in these classes and $\gamma_2^d(G)=1$ if and only if $G$ contains a vertex of degree $n-1$. Next, we define a graph class, called \emph{GC graph}, for which domination problem is easily solvable, but disjunctive domination problem is NP-complete.
\begin{definition}[GC graph]
A graph $G'=(V',E')$ is said to be a \emph{GC graph} if it can be constructed from a general graph $G=(V,E)$ by adding a pendant vertex to every vertex of $G$. Formally, $V'=V\cup \{w_i \mid 1\leq i \leq n\}$ and $E'=E\cup \{v_{i}w_{i}\mid 1\leq i \leq n\}$.
\end{definition}

Note that, every vertex of a GC graph $G'$ is either a pendant vertex or adjacent to a unique pendant vertex and hence, $\gamma(G')=n$. In Section \ref{sec:NPC}, we show that DDDP is NP-complete for the class of GC graphs.

\section{Polynomial time algorithm for proper interval graphs}\label{sec:polyalgo}
In this section, we present a polynomial time algorithm to find a minimum cardinality disjunctive dominating set in proper interval graphs.

Let $\alpha=(v_{1},v_{2},\ldots,v_{n})$ be a BCO of the proper interval graph $G$. Let $MaxN_{G}(v_{i})$ denote the maximum index neighbor of $v_{i}$ with respect to the ordering $\alpha$. We start with an empty set $D$. At each iteration $i$ of the algorithm, we update the set $D$ in such a way that the vertex $v_{i}$ and all the vertices which appear before $v_{i}$ in the BCO $\alpha$, are disjunctively dominated by the set $D$. At the end of $n^{th}$ iteration, $D$ disjunctively dominate all the vertices of graph $G$.
The algorithm DISJUNCTIVE-PIG for finding a minimum cardinality disjunctive dominating set in a proper interval graph is given below.

\begin{algorithm}
%\DontPrintSemicolon
\caption{DISJUNCTIVE-PIG($G, \alpha=(v_{1},v_{2},\ldots,v_{n})$)}
%\LinesNumbered
%\begin{algorithmic}[1]
 Initialize $D=\emptyset$;\\
\For {$i=1:n$} {
Compute $N_{G}(v_{i})\cap D$ and $N_{G}^{2}(v_{i})\cap D$;\\
 \textbf{Case $1$:} Either $N_{G}[v_{i}]\cap D \neq \emptyset$, or $|N^{2}_{G}(v_{i})\cap D|\geq 2$\\
 \hspace*{1cm} No update in $D$ is done;\\
 \textbf{Case $2$:} $N_{G}[v_{i}]\cap D == \emptyset$ and $N^{2}_{G}(v_{i})\cap D==\emptyset$\\
 \hspace*{1cm} Update $D$ as $D=D\cup \{MaxN_{G}(v_{i})\}$;\\
 \textbf{Case $3$:} $N_{G}[v_{i}]\cap D == \emptyset$ and $|N^{2}_{G}(v_{i})\cap D|==1$\\
 \hspace*{1cm} Find $v_{r}\in N^{2}_{G}(v_{i})\cap D$;\\
 \hspace*{1cm} $v_{j}=Max[v_{i}]$; $v_{k}=Max[v_{j}]$;\\
 \hspace*{1cm} $S=\{v_{i+1},v_{i+2},\ldots,v_{j-1}\}$;\\
 \hspace*{1cm} \textbf{Subcase $3.1$:} For every $v\in S$, either $vv_{k}\in E$ or $d(v,v_{r})=2$\\
 \hspace*{2.4cm} Update $D$ as $D=D\cup \{v_{k}\}$;\\
 \hspace*{1cm} \textbf{Subcase $3.2$:} $v_{s}$ is the least index vertex in $S$ such that \\
\hspace*{1cm} $d(v_{s},v_{k})=2$ and $d(v_{s},v_{r})>2$\\
 \hspace*{2.4cm} Update $D$ as $D=D\cup \{MaxN_{G}(v_{s})\}$;
}
return $D$;
\end{algorithm}

Next we give the proof of correctness of the algorithm. Let $\alpha=(v_{1},v_{2},\ldots,v_{n})$ be the BCO of a proper interval graph $G$. Define the set $V_{i}=\{v_{1},v_{2},\ldots,v_{i}\}$, $1\leq i \leq n$, and $V_{0}=\emptyset$. Also suppose that $D_{i}$ denotes the set $D$ obtained after processing vertex $v_{i}$, $1\leq i \leq n$, and $D_{0}=\emptyset$. We will prove that $D_{n}$ is a minimum cardinality disjunctive dominating set of $G$.

\begin{theorem}
For each $i$, $0\leq i \leq n$, the following statements are true:
\begin{itemize}
\item[$(a)$] $D_{i}$ disjunctively dominates the set $V_{i}$.
\item[$(b)$] There exists a minimum cardinality disjunctive dominating set $D_{d}^{*}$ such that $D_{i}$ is contained in $D_{d}^{*}$.
\end{itemize}
\end{theorem}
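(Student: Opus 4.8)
The natural approach is induction on $i$, proving $(a)$ and $(b)$ simultaneously. The base case $i=0$ is immediate: $D_0=\emptyset$ disjunctively dominates $V_0=\emptyset$, and $\emptyset$ is contained in every minimum disjunctive dominating set. So fix $i\ge 1$, assume $(a)$ and $(b)$ hold for $i-1$, and fix (using $(b)$ for $i-1$) a minimum disjunctive dominating set $D_d^{*}$ with $D_{i-1}\subseteq D_d^{*}$. The argument then splits according to which case of DISJUNCTIVE-PIG is executed when $v_i$ is processed. The only structural facts about $G$ that I expect to need are the standard consequences of $\alpha=(v_1,\dots,v_n)$ being a BCO: the ``umbrella'' property that $a<b<c$ and $v_av_c\in E$ imply $v_av_b,v_bv_c\in E$; hence each closed neighbourhood $N_G[v_t]$, and more generally each closed distance-$2$ ball of a vertex, is a set of consecutive vertices of $\alpha$; and hence $MaxN_G(v_a)\le MaxN_G(v_b)$ whenever $a<b$ and $v_av_b\in E$.

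Part $(a)$ is the routine half. In Case $1$ we have $D_i=D_{i-1}$ and $v_i$ is already disjunctively dominated, so since $V_i=V_{i-1}\cup\{v_i\}$ we are done. In Case $2$ the added vertex $MaxN_G(v_i)$ is a neighbour of $v_i$, so $v_i$ becomes dominated, and every vertex of $V_{i-1}$ stays disjunctively dominated because disjunctive domination is monotone under enlarging the set. In Case $3$ the added vertex is at distance at most $2$ from $v_i$: $v_k=MaxN_G(MaxN_G(v_i))$ is a neighbour of a neighbour of $v_i$, while in Subcase $3.2$ the vertex $v_s\in S=\{v_{i+1},\dots,v_{j-1}\}$ satisfies $v_sv_i\in E$ by the umbrella property, so $MaxN_G(v_s)$ is again a neighbour of a neighbour of $v_i$. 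Together with $v_r\in D_{i-1}\cap N_G^2(v_i)$ this provides $v_i$ with either a neighbour in $D_i$ or two distinct vertices of $D_i$ within distance $2$; and $V_{i-1}$ is again unharmed.

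Part $(b)$ is the substance, and is an exchange argument. Whenever Case $2$ or Case $3$ fires, $D_{i-1}$ does not disjunctively dominate $v_i$, so $D_d^{*}$ must disjunctively dominate $v_i$ using vertices of $D_d^{*}\setminus D_{i-1}$: either some $v_t\in N_G[v_i]\setminus D_{i-1}$, or --- in Case $3$, when $D_d^{*}$ contains no neighbour of $v_i$ --- a vertex $v_{r'}\in D_d^{*}\setminus D_{i-1}$ with $dist_G(v_i,v_{r'})=2$, the second distance-$2$ vertex possibly being $v_r$ itself. In every such configuration I would delete the identified vertex from $D_d^{*}$ and insert the vertex $w$ that the algorithm adds ($w=MaxN_G(v_i)$ in Case $2$; $w=v_k$ in Subcase $3.1$; $w=MaxN_G(v_s)$ in Subcase $3.2$), obtaining $D^{**}$. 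One checks that $|D^{**}|\le|D_d^{*}|$ and $D_i\subseteq D^{**}$ (both straightforward), and that $D^{**}$ is still a disjunctive dominating set. For the last point, only vertices within distance $2$ of the deleted vertex can lose a dominator; by the consecutiveness of closed neighbourhoods and distance-$2$ balls such vertices lie in a short window of $\alpha$ around $i$; those of index $<i$ are re-covered by $D_{i-1}\subseteq D^{**}$; and for those of index $\ge i$ one uses the umbrella and monotonicity properties to verify that $w$ is adjacent to, or within distance $2$ of, every such vertex that the deleted vertex had been serving. The defining conditions on $S$ in Subcases $3.1$ and $3.2$ --- in $3.1$, that every $v\in S$ is either adjacent to $v_k$ or at distance $2$ from $v_r$; in $3.2$, the minimality of $v_s$ together with $d(v_s,v_r)>2$ --- are exactly what determines which vertex of $D_d^{*}$ must be removed so that the single inserted vertex $w$ compensates for it.

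I expect the main obstacle to be the bookkeeping in Case $3$: one must enumerate all the ways $D_d^{*}$ can disjunctively dominate $v_i$ (via a neighbour of $v_i$, via the pair $\{v_r,v_{r'}\}$, or via an entirely new pair of vertices at distance $2$), track precisely which vertices of $\{v_{i+1},v_{i+2},\dots\}$ might be left under-dominated when a single vertex is removed from $D_d^{*}$, and check in each resulting subcase of $3.1$/$3.2$ that the one vertex inserted by the algorithm rescues all of them; the analogous but milder issue in Case $2$ is when $D_d^{*}$ dominates $v_i$ through two far-apart vertices at distance $2$ rather than through a neighbour. The umbrella property, in the strengthened form that closed neighbourhoods and distance-$2$ balls are consecutive blocks of $\alpha$ and that $MaxN_G$ is monotone along edges, is the tool that keeps each individual verification to a few lines; the real work is organising these verifications along the logic of the algorithm's cases. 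Finally, $(b)$ with $i=n$ combined with $(a)$ yields $D_n=D_d^{*}$, so DISJUNCTIVE-PIG returns a minimum disjunctive dominating set.
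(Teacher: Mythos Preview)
Your overall strategy---induction on $i$ with an exchange argument for part $(b)$---matches the paper exactly, and your treatment of part $(a)$ and of Case~1 is fine. The gap is in the exchange step: you commit to a \emph{one-for-one} swap in every subcase (``delete the identified vertex from $D_d^{*}$ and insert the vertex $w$ that the algorithm adds''), but this is not always enough.

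The problematic configurations are precisely the ones you flag but underestimate. In Case~2, when $N_G[v_i]\cap D_d^{*}=\emptyset$ and $D_d^{*}$ covers $v_i$ through two distance-$2$ vertices $v_q,v_s$ (both outside $D_{i-1}$), removing just one of them and inserting $v_j=MaxN_G(v_i)$ can leave some vertex $v_a$ to the right under-dominated: $v_a$ may have had $v_q$ and $v_s$ as its only two distance-$2$ witnesses, and $v_j$ can lie strictly left of both, hence at distance $>2$ from $v_a$. The paper therefore performs a \emph{two-for-two} exchange, removing $\{v_q,v_s\}$ and inserting $\{v_j,\,MaxN_G(v_s)\}$; the second inserted vertex is not the one the algorithm adds at step $i$, but it is needed to re-cover the right tail, and $D_i$ is still contained in the resulting set. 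The same phenomenon recurs, more sharply, in Subcase~3.2 when $D_d^{*}\cap V_p=D_{i-1}$: here $v_s$ forces $D_d^{*}$ to contain two vertices $v_{t_1},v_{t_2}$ with $p<t_1<t_2$, and the paper swaps $\{v_{t_1},v_{t_2}\}$ for $\{v_p,\,v_{w'}\}$ with $v_{w'}=MaxN_G(MaxN_G(MaxN_G(v_p)))$. A single swap of $v_{t_1}$ for $v_p$ fails for exactly the reason above.

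So the missing idea is that the auxiliary optimum $D_d^{**}$ sometimes has to contain a vertex beyond $D_i$, chosen far enough to the right to compensate for the two removed vertices; your umbrella/monotonicity toolkit is correct and is what the paper uses to verify these swaps, but the swap itself must be two-for-two in these subcases.
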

\begin{proof}
We prove the theorem by induction on $i$. The basis step is trivial
as $D_{0}=\emptyset$. Next assume that the theorem is true for
$i-1$. So,  $(a)$ $D_{i-1}$ disjunctively dominates the set
$V_{i-1}$,  $(b)$ there exists a
minimum cardinality disjunctive dominating set $D_{d}^{*}$ such that $D_{i-1}$ is contained
in $D_{d}^{*}$.

Next we prove the theorem for $i$. According to our algorithm, we
need to  discuss the following three cases.

\noindent \textbf{Case 1:} Either $N_{G}[v_{i}]\cap D_{i-1} \neq \emptyset$, or $|N^{2}_{G}(v_{i})\cap D_{i-1}|\geq 2$.

Here $D_{i}=D_{i-1}$. It is easy to notice that all the conditions of the theorem are satisfied.

\noindent \textbf{Case 2:} $N_{G}[v_{i}]\cap D_{i-1} = \emptyset$ and $N^{2}_{G}(v_{i})\cap D_{i-1}=\emptyset$.

Here $D_{i}=D_{i-1}\cup \{v_{j}\}$ where $v_{j}=MaxN_{G}(v_{i})$. Hence, condition $(a)$ of
the theorem is trivially satisfied. If $v_{j}\in D_{d}^{*}$, then $D_{i}\subseteq D_{d}^{*}$.
Hence both the conditions of the theorem are satisfied, and $D_{d}^{*}$ is the required minimum
cardinality disjunctive dominating set of $G$. If $v_{j}\notin D_{d}^{*}$, then there are two possibilities:\\
\textbf{(I) There exists a vertex $v_{p}\in N_{G}[v_{i}]\cap D_{d}^{*}$. }

\noindent Define the set $D_{d}^{**}=(D_{d}^{*}\setminus \{v_{p}\})\cup \{v_{j}\}$. Note that $D_{i}\subseteq D_{d}^{**}$, and $|D_{d}^{*}|=|D_{d}^{**}|$. Now, to prove condition $(b)$ of the theorem, it is enough to show that $D_{d}^{**}$ is a disjunctive dominating set of $G$. Note that $D_{i-1}\cup \{v_{j}\}\subseteq D_{d}^{**}$. Now consider an arbitrary vertex $v_{a}$ of $G$. If $a<i$, then the vertex $v_{a}$ is disjunctively dominated by the set $D_{i-1}$, and hence by $D_{d}^{**}$. If $a\geq i$, and $v_{p}\in N_{G}[a]$, then $v_{j}\in N_{G}[v_{a}]$. If $a\geq i$, and $v_{p}\in N_{G}^{2}(v_{a})$, then $v_{j}\in N_{G}[v_{a}]$ or $v_{j}\in N_{G}^{2}(v_{a})$. This proves that $D_{d}^{**}$ is a disjunctive dominating set of $G$.

\noindent \textbf{(II) For $q<s$, vertices $v_{q},v_{s}\in N^{2}_{G}(v_{i})\cap D_{d}^{*}$.} \\
Let $MaxN_{G}(v_{i})=v_{j}$ and $MaxN_{G}(v_{j})=v_{k}$. Then $q<s\leq k$. Let $v_{t}=MaxN_{G}(v_{s})$ and $v_{r}=MaxN_{G}(v_{t})$. We again consider three possibilities:\\
$(i)$ $q<s<i$\\
Here $r\leq j$. Now consider an arbitrary vertex $v_{a}$ of $G$. If $a<i$, then the vertex $v_{a}$ is disjunctively dominated by the set $D_{i-1}$. If $a\geq i$, and $v_{s}\in N^{2}_{G}(v_{a})$ or $v_{q},v_{s}\in N^{2}_{G}(v_{i})$, then $v_{j}\in N_{G}[v_{a}]$. Hence $(D_{d}^{*}\setminus\{v_{q},v_{s}\})\cup \{v_{j}\}$ is a disjunctive dominating set of $G$ of cardinality less than $|D_{d}^{*}|$, which is a contradiction, as $D_{d}^{*}$ is a minimum disjunctive dominating set of $G$. Therefore, this situation will never arise.\\
$(ii)$ $q<i<s$\\
Consider an arbitrary vertex $v_{a}$ of $G$. If $a<i$, then the vertex $v_{a}$ is disjunctively dominated by the set $D_{i-1}$. If $a\geq i$, and $v_{q}\in N_{G}^{2}(v_{a})$, then $v_{j}\in N_{G}[v_{a}]$. If $a\geq i$, and $v_{q}\notin N_{G}^{2}(v_{a})$, and either $v_{s}\in N_{G}[v_{a}]$ or $v_{s}\in N_{G}^{2}(v_{a})$, then either $v_{j}\in N_{G}[v_{a}]$ or $v_{t}\in N_{G}[v_{a}]$. Hence, if we define $D_{d}^{**}=(D_{d}^{*}\setminus \{v_{q},v_{s}\})\cup \{v_{j},v_{t}\}$, then $D_{d}^{**}$ is a minimum cardinality disjunctive dominating set of $G$ and $D_{i}\subseteq D_{d}^{**}$. This proves the condition $(b)$ of the theorem.\\
 $(iii)$ $i<q<s$\\
Here $s\leq k$. Consider an arbitrary vertex $v_{a}$ of $G$. If $a<i$, then the vertex $v_{a}$ is disjunctively dominated by the set $D_{i-1}$. If $a\geq i$, and $v_{q}\in N_{G}[v_{a}]$ or $v_{s}\in N_{G}[v_{a}]$ or $v_{q},v_{s}\in N_{G}^{2}(v_{a})$ or $v_{s}\in N_{G}^{2}(v_{a})$, then either $v_{j}\in N_{G}[v_{a}]$ or $v_{t}\in N_{G}[v_{a}]$. Hence, if we define $D_{d}^{**}=(D_{d}^{*}\setminus \{v_{q},v_{s}\})\cup \{v_{j},v_{t}\}$, then $D_{d}^{**}$ is a minimum cardinality disjunctive dominating set of $G$ and $D_{i}\subseteq D_{d}^{**}$. This proves the condition $(b)$ of the theorem.

\noindent \textbf{Case 3:} $|N^{2}_{G}(v_{i})\cap D_{i-1}|=1$, $N^{2}_{G}(v_{i})\cap D_{i-1}=\{v_{r}\}$ $(r<i)$, $v_{j}=MaxN_{G}(v_{i})$, $v_{k}=MaxN_{G}(v_{j})$, and $S=\{v_{i+1},v_{i+2},\ldots,v_{j-1}\}$.

\noindent \textbf{Subcase 3.1:} For every $v\in S$, either $vv_{k}\in E$ or $d(v,v_{r})=2$.\\
 Here $D_{i}=D_{i-1}\cup \{v_{k}\}$.

Clearly, condition $(a)$ of the theorem is satisfied. If $v_{k}\in D_{d}^{*}$, then $D_{i}\subseteq D_{d}^{*}$.
Hence both the conditions of the theorem are satisfied, and $D_{d}^{*}$ is the required minimum
cardinality disjunctive dominating set of $G$. If $v_{k}\notin D_{d}^{*}$, then to disjunctively dominate $v_{i}$, at least one vertex before $v_{k}$  in BCO $\alpha$, say $v_{p}$, must belong to $D_{d}^{*}$. Define $D_{d}^{**}=(D_{d}^{*}\setminus \{v_{p}\})\cup \{v_{k}\}$. Then $|D_{d}^{**}|=|D_{d}^{*}|$ and $D_{i}\subseteq D_{d}^{**}$. Now, to prove condition $(b)$ of the theorem, it is enough to show that $D_{d}^{**}$ is a disjunctive dominating set of $G$. Consider an arbitrary vertex $v_{b}$ in $G$. If $b\leq k$, then $v_{b}$ is disjunctively dominated by the set $D_{i-1}\cup \{v_{k}\}$, and hence by $D_{d}^{**}$. If $b>k$, and $v_{p}\in N_{G}[v_{b}]$, then $v_{k}\in N_{G}[v_{b}]$. If $b>k$, and $v_{p}\in N_{G}^{2}(v_{b})$, then either $v_{k}\in N_{G}[v_{b}]$ or $v_{k}\in N_{G}^{2}(v_{b})$. Hence $D_{d}^{**}$ is a disjunctive dominating set of $G$.

\noindent \textbf{Subcase 3.2:}  $v_s$ is the least index vertex in $S$ such that $d_{G}(v_s, v_k)=2$ and
$d(v_s,v_r) > 2$.\\ Here $D_{i}= D_{i-1}\cup \{v_{p}\}$, where $v_{p}=MaxN_{G}(v_s)$. Clearly, condition $(a)$ of
the theorem is trivially satisfied. If $v_{p}\in D_{d}^{*}$, then $D_{i}\subseteq D_{d}^{*}$.
Hence both the conditions of the theorem are satisfied, and $D_{d}^{*}$ is the required minimum
cardinality disjunctive dominating set of $G$. If $v_{p}\notin D_{d}^{*}$, then there are two possibilities:\\
\textbf{(I) $v_{q}\in D_{d}^{*}\setminus D_{i-1}$, where $q<p$}\\
Define $D_{d}^{**}=(D_{d}^{*}\setminus \{v_{q}\})\cup \{v_{p}\}$. Then $|D_{d}^{**}|=|D_{d}^{*}|$ and $D_{i}\subseteq D_{d}^{**}$. Now, to prove condition $(b)$ of the theorem, it is enough to show that $D_{d}^{**}$ is a disjunctive dominating set of $G$. Consider an arbitrary vertex $v_{b}$ in $G$. If $b< i$, then $v_{b}$ is disjunctively dominated by the set $D_{i-1}$, and hence by $D_{d}^{**}$. If $b\geq i$, and $v_{q}\in N_{G}[v_{b}]$, then  $v_{p}\in N_{G}[v_{b}]$ or $v_{r},v_{p}\in N_{G}^{2}(v_{b})$. If $b\geq i$, and $v_{q}\in N_{G}^{2}(v_{b})$, then $v_{p}\in N_{G}[v_{b}]$ or $v_{p}\in N_{G}^{2}(v_{b})$. Hence $D_{d}^{**}$ is a disjunctive dominating set of $G$.

\noindent \textbf{(II) $D_{d}^{*}\cap V_{p}=D_{i-1}$}\\
 To disjunctively dominate the vertex $v_{s}$, at least two vertices from the set $V_{w}\setminus (V_{p}\cup D_{i-1})$ must belong to $D_{d}^{*}$, where $v_{w}=MaxN_{G}(v_{p})=MaxN_{G}(MaxN_{G}(v_{s}))$. Let they are $v_{t1},v_{t2}$ where $t_{1}<t_{2}$. Note that $p<t_{1}<t_{2}\leq w$. Let $v_{w'}=MaxN_{G}(MaxN_{G}(v_{w}))$. Now define the set $D_{d}^{**}=(D_{d}^{*}\setminus \{v_{t1},v_{t2}\})\cup \{v_{p},v_{w'}\}$. Then $|D_{d}^{**}|=|D_{d}^{*}|$ and $D_{i}\subseteq D_{d}^{**}$. Now to prove the condition $(b)$ of the theorem, it is enough to show that $D_{d}^{**}$ a disjunctive dominating set of $G$. Consider  an arbitrary vertex $v_{b}$ in $G$. If $b< i$, then $v_{b}$ is disjunctively dominated by the set $D_{i-1}$, and hence by $D_{d}^{**}$. If $s>b\geq i$, then either $v_{p}\in N_{G}[v_{b}]$ or $v_{r},v_{p}\in N_{G}^{2}(v_{b})$ (since every vertex in $V_{s-1}\setminus V_{i-1}$ is at distance $2$ from the vertex $v_{r}$). If $b\geq s$, and $v_{t_{1}}\in N_{G}[v_{b}]$ or $v_{t_{2}}\in N_{G}[v_{b}]$ or $v_{t_{1}},v_{t_{2}}\in N_{G}^{2}(v_{b})$, the either $v_{p}\in N_{G}[v_{b}]$ or $v_{p},v_{w'}\in N_{G}^{2}(v_{b})$.  If $b\geq s$ and $v_{t_{1}}\in N_{G}^{2}(v_{b})$ and $v_{t_{2}}\notin N_{G}^{2}(v_{b})$, then $v_{p}\in N_{G}[v_{b}]$. If $b\geq s$ and $v_{t_{1}}\notin N_{G}^{2}(v_{b})$ and $v_{t_{2}}\in N_{G}^{2}(v_{b})$, then either $v_{w'}\in N_{G}[v_{b}]$ or $v_{p},v_{w'}\in N_{G}^{2}(v_{b})$. Hence $D_{d}^{**}$ is a disjunctive dominating set of $G$.
Hence our theorem is proved.
\end{proof}

In view of the above theorem, the set $D$ computed by the algorithm DISJUNCTIVE-PIG is a minimum cardinality
 disjunctive dominating set of $G$.  Now, we show that the algorithm DISJUNCTIVE-PIG can be implemented in polynomial time. We use the adjacency list representation of the graph. We maintain an array $D_{set}$ for the set $D$ such that $D_{set}[j]=1$ if $v_{j}\in D$. We maintain the All pair distance Matrix $Dist[1..n,1..n]$ such that $Dist[i,j]$ is the distance between $v_i$ and $v_j$. This can be done in $O(n^3)$ time. Now $N_G[v_i] \cap D$ can be computed in $O(n)$ time by looking up $Dist$ matrix and array $D_{set}$. Similarly, $N^{2}_G(v_i) \cap D$ can be computed in $O(n)$ time. Also $MaxN_{G}(v_{i})$ can be computed in $O(n)$ time. Hence, in any iteration,  all the operations can be done in $O(n^2)$ time. Therefore overall time is $O(n^3)$, as number of iterations are $n$.
 Since, BCO of a proper interval graph can be computed in $O(n+m)$ time \cite{panda}, and all the computations in the algorithm DISJUNCTIVE-PIG can be done in $O(n^{3})$ time, we have the following theorem.

\begin{theorem}
MDDP can be solved in $O(n^{3})$ time in proper interval graphs.
\end{theorem}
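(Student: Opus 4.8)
The plan is to combine the correctness statement already established with a careful running-time accounting of algorithm DISJUNCTIVE-PIG. Correctness comes essentially for free: applying the previous theorem with $i=n$ shows that the set $D_n$ returned by DISJUNCTIVE-PIG disjunctively dominates $V_n=V$ and is contained in \emph{some} minimum cardinality disjunctive dominating set $D_d^{*}$ of $G$; hence $|D_n|\le |D_d^{*}|=\gamma_2^d(G)$, so $D_n$ is itself a minimum cardinality disjunctive dominating set. Thus the entire content of the theorem reduces to bounding the time.

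First I would handle preprocessing. A BCO $\alpha=(v_1,\ldots,v_n)$ of a proper interval graph can be computed in $O(n+m)$ time, and it is precisely the ordering the algorithm takes as input. I would then precompute the all-pairs distance matrix $Dist[1..n,1..n]$, for instance by running BFS from each vertex, in $O(n(n+m))=O(n^{3})$ time (note proper interval graphs may be dense, so $m$ can be $\Theta(n^{2})$), or equivalently by Floyd--Warshall. The current set $D$ is maintained as a boolean array $D_{set}[1..n]$, so membership tests and insertions are $O(1)$.

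Next I would bound the cost of a single iteration $i$. Scanning the row $Dist[i,\cdot]$ against $D_{set}$ computes $N_G[v_i]\cap D$, $N^2_G(v_i)\cap D$, and $MaxN_G(v_i)$, $MaxN_G(v_j)$, each in $O(n)$ time; this already covers Cases $1$ and $2$. For Case $3$ one further locates the unique $v_r\in N^2_G(v_i)\cap D$ and then scans the at most $n$ vertices of $S=\{v_{i+1},\ldots,v_{j-1}\}$; for each $v\in S$ the tests ``$vv_k\in E$'', ``$d(v,v_r)=2$'', ``$d(v,v_k)=2$'', ``$d(v,v_r)>2$'' are $O(1)$ lookups in $Dist$, so deciding between Subcases $3.1$ and $3.2$ (and performing the single corresponding update of $D$) costs $O(n)$ — or, with a looser implementation, $O(n^{2})$, which is still within budget. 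In every case a single iteration runs in $O(n^{2})$ time.

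Finally, summing over the $n$ iterations of the main loop gives $O(n^{3})$, and this dominates both the $O(n+m)\subseteq O(n^{2})$ cost of computing the BCO and the $O(n^{3})$ cost of building $Dist$, for an overall bound of $O(n^{3})$. There is no real obstacle here; the only point demanding a little care is the Case $3$ bookkeeping — making sure that scanning $S$ and evaluating its subcase conditions does not silently exceed $O(n^{2})$ per iteration — and since each individual condition is a constant-time distance lookup, the bound goes through and the theorem follows.
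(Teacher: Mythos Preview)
Your proposal is correct and follows essentially the same approach as the paper: invoke the preceding correctness theorem at $i=n$, precompute the all-pairs distance matrix in $O(n^3)$, maintain $D$ as a boolean array, and argue that each of the $n$ iterations costs at most $O(n^2)$ via $O(n)$ scans of $Dist$ and $D_{set}$. If anything, you are slightly more explicit than the paper about the Case~3 scan over $S$, but the accounting and the overall structure are the same.
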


However, the algorithm DISJUNCTIVE-PIG can be implemented in $O(n+m)$ time using additional data structures. The details are given below.
 We first describe some notations. Let $\alpha=(v_{1},v_{2},\ldots,v_{n})$ be a BCO of the proper interval graph $G=(V,E)$. We maintain a set $D$. Initially $D=\emptyset$. At the end of $n^{th}$ iteration, $D$ becomes a minimum cardinality disjunctive dominating set of $G$.
 We maintain two arrays $Min[1,\ldots,n]$ and $Max[1,\ldots,n]$. For a vertex $v$, $Min[v]$ denotes the minimum index vertex in the BCO $\alpha$, which is adjacent to $v$, and $Max[v]$ denotes the maximum index vertex in the BCO $\alpha$, which is adjacent to $v$.
 We also maintain an array $D_{count}[1,\ldots,n]$. For a vertex $v\in V$, $D_{count}[v]$ denotes the number of vertices in $D$ which dominate the vertex $v$.

\begin{lemma} The following statements are true:
  \begin{itemize}
  \item[(i)] $D_{count}[v_{i}]=|N_{G}[v_{i}]\cap D|$.
  \item[(ii)] If $N_{G}[v_{i}]\cap D=\emptyset$, then $D_{count}[Max[v_{i}]]+D_{count}[Min[v_{i}]]=|N_{G}^{2}(v_{i})\cap D|$.
 \end{itemize}
\end{lemma}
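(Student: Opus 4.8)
The plan is to handle the two parts separately: part~(i) merely unwinds the definition of domination, while part~(ii) rests on the interval structure of a proper interval graph equipped with a BCO. For part~(i): by definition a vertex $u$ dominates $v_{i}$ exactly when $u=v_{i}$ or $uv_{i}\in E$, i.e. when $u\in N_{G}[v_{i}]$; hence the number of vertices of $D$ dominating $v_{i}$, which is what $D_{count}[v_{i}]$ records, is $|N_{G}[v_{i}]\cap D|$, and there is nothing further to prove.

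For part~(ii) I would first use the standard facts about a BCO $\alpha=(v_{1},\dots,v_{n})$ of a proper interval graph: for every vertex $v_{i}$ the set $N_{G}[v_{i}]$ is a block of consecutively indexed vertices, say $N_{G}[v_{i}]=\{v_{p},\dots,v_{q}\}$ with $p\le i\le q$, and these blocks are monotone, i.e. $j\le k$ implies the left (resp. right) endpoint of the block of $N_{G}[v_{j}]$ is at most that of $N_{G}[v_{k}]$. Since $G$ is connected we also have $v_{i-1}v_{i},v_{i}v_{i+1}\in E$, so for $1<i<n$ we get $p\le i-1<i<i+1\le q$ and therefore $v_{p}=v_{Min[v_{i}]}$ and $v_{q}=v_{Max[v_{i}]}$. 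The boundary cases $i=1$ and $i=n$, where $v_{i}$ has neighbours on only one side and $Min[v_{i}],Max[v_{i}]$ may coincide, I would dispose of separately; in the algorithm's flow such a $v_{i}$ is either already disjunctively dominated or treated directly, so only interior $v_{i}$ are relevant here.

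Now $N_{G}^{2}(v_{i})$ is disjoint from $N_{G}[v_{i}]=\{v_{p},\dots,v_{q}\}$, so each of its vertices has index $<p$ or $>q$; let $L$ and $R$ be these two disjoint parts, so $N_{G}^{2}(v_{i})=L\cup R$. The crux is the identification $L=\{v_{\ell}\in N_{G}[v_{Min[v_{i}]}]:\ell<p\}$ and, symmetrically, $R=\{v_{\ell}\in N_{G}[v_{Max[v_{i}]}]:\ell>q\}$. For the first: if $\ell<p$ and $dist_{G}(v_{i},v_{\ell})=2$, choose a common neighbour $v_{m}$ of $v_{i}$ and $v_{\ell}$ with $p\le m\le q$; since $\ell<p\le m$ and $N_{G}[v_{\ell}]$ is a block containing indices $\ell$ and $m$, it contains $p$, so $v_{\ell}v_{p}\in E$; conversely any $v_{\ell}\in N_{G}[v_{p}]$ with $\ell<p$ satisfies $v_{i}\sim v_{p}\sim v_{\ell}$ and $v_{\ell}\notin N_{G}[v_{i}]$, hence lies at distance $2$ from $v_{i}$. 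Finally I invoke the hypothesis $N_{G}[v_{i}]\cap D=\emptyset$: monotonicity forces the part of $N_{G}[v_{Min[v_{i}]}]$ of index $\ge p$ and the part of $N_{G}[v_{Max[v_{i}]}]$ of index $\le q$ to lie inside $\{v_{p},\dots,v_{q}\}=N_{G}[v_{i}]$, and in particular $N_{G}[v_{Min[v_{i}]}]\cap N_{G}[v_{Max[v_{i}]}]\subseteq N_{G}[v_{i}]$; all of these meet $D$ in the empty set, so $N_{G}[v_{Min[v_{i}]}]\cap D=L\cap D$ and $N_{G}[v_{Max[v_{i}]}]\cap D=R\cap D$, and these are disjoint. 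Using part~(i) this yields
\[
D_{count}[Max[v_{i}]]+D_{count}[Min[v_{i}]]=|N_{G}[v_{Max[v_{i}]}]\cap D|+|N_{G}[v_{Min[v_{i}]}]\cap D|=|R\cap D|+|L\cap D|=|N_{G}^{2}(v_{i})\cap D|.
\]

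The only step carrying real content — and the main obstacle — is this bookkeeping: pinning down the consecutive-neighbourhood and monotonicity properties of the BCO and then checking that $L$, $R$ and the ``middle'' block $N_{G}[v_{i}]$ exactly account for the closed neighbourhoods of $v_{Min[v_{i}]}$ and $v_{Max[v_{i}]}$, so that under the hypothesis nothing is double-counted or dropped; everything else is definitional.
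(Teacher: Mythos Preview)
Your argument is correct, and indeed more thorough than the paper's own treatment: the paper simply writes ``The proof is easy and hence is omitted.'' Your proof supplies exactly the details one would expect the authors had in mind---the consecutive-interval structure of closed neighbourhoods in a BCO, the identification of the left and right halves of $N_{G}^{2}(v_{i})$ with the outer portions of $N_{G}[v_{Min[v_{i}]}]$ and $N_{G}[v_{Max[v_{i}]}]$, and the use of the hypothesis $N_{G}[v_{i}]\cap D=\emptyset$ to kill double-counting in the middle block---so there is no alternative route to contrast. One small remark: your handling of the boundary cases $i=1$ and $i=n$ is a bit hand-wavy (in these cases $Min[v_{i}]$ or $Max[v_{i}]$ need not coincide with the endpoint of the closed-neighbourhood block, since $v_{i}$ itself is that endpoint), but the same consecutive-interval reasoning goes through with $L=\emptyset$ (resp.\ $R=\emptyset$) and the conclusion is unchanged.
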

\begin{proof}
The proof is easy and hence is omitted.
\end{proof}

Based on the above discussion, we have the detailed algorithm for finding minimum cardinality disjunctive dominating set which is presented in M-DISJUNCTIVE-PIG.

\begin{algorithm}[H]
%\DontPrintSemicolon
\caption{M-DISJUNCTIVE-PIG(G)}
\LinesNumbered
 Obtain a BCO $\sigma=\{v_{1},v_{2},...,v_{n}\}$ of proper interval graph $G;$\\
Obtain array $Min$ and $Max$;\\
Initialize $D=\emptyset$;\\
Initialize  $D_{count}[v_{i}]=0$ for all $i$, $1\leq i \leq n$;\\
\For {$i=1:n$} {
\If {($(D_{count}[v_{i}]!=0)$ or $(D_{count}[Max[v_{i}]]+D_{count}[Min[v_{i}]]\geq 2))$}{
no update;
}
\ElseIf{$(D_{count}[v_{i}]==0)$ and $(D_{count}[Max[v_{i}]]+D_{count}[Min[v_{i}]]==0)$}
{
$v_{k}=Max[v_{i}]$;\\
$D=D\cup \{v_{k}\}$;\\
\ForEach{$v\in N_{G}[v_{k}]$}{
$D_{count}[v]=D_{count}[v]+1$;
}
}
\ElseIf{($(D_{count}[v_{i}]==0)$ and $(D_{count}[Max[v_{i}]]+D_{count}[Min[v_{i}]]==1)$)}
{
(This basically means that $D_{count}[Min[v_{i}]]=1$)\\
Let $v_{t}=Min[v_{i}]$,
$v_{j}=Max[v_{i}]$, and
$v_{k}=Max[v_{j}]$;\\
Let $\{v_{r}\}=N_{G}[v_{t}]\cap D$;\\
%\ForEach{$v\in N_{G}[v_{t}]$}
%{
%\If{$v\in D$}
%{
%$v_{r}=v$;
%}
%}
\For {$s=i+1:j-1$} {
Let $v_{a}=Min[v_{k}]$, $v_{b}=Max[v_{r}]$, $v_{c}=Min[v_{s}]$;\\
\If{$s<a$ and $b<c$}{
$D=D\cup \{Max[v_{s}]\}$;\\
\ForEach{$v\in N_{G}[Max[v_{s}]]$}{
$D_{count}[v]=D_{count}[v]+1$;
}
break;

}
}
\If{$s==j$}{
$D=D\cup \{v_{k}\}$;\\
\ForEach{$v\in N_{G}[v_{k}]$}{
$D_{count}[v]=D_{count}[v]+1$;
}
}
}
}
return $D$;
\end{algorithm}

 Next we show that this algorithm M-DISJUNCTIVE-PIG can be implemented in $O(n+m)$ time. We first compute $Max[v_i]$ and $Min[v_i]$ for each $v_i$, $1\leq i \leq n$. This takes $O(d_{G}(v_i))$ time for each vertex $v_i$. Hence arrays $Min$ and $Max$ can be computed in $O(n+m)$ time. We can find a vertex in $N_G[v_t] \cap D$ in $O(1)$ time by maintaining  an array $B[1,\ldots,n]$ of linked lists such that $B[i]$ contains all the vertices of $N_G[v_i] \cap D$. This is done by inserting $v_j$ in the linked lists of $v_{j}$ and all the neighbors of $v_j$ whenever $v_j$ is included in $D$. So maintaining this information takes $\sum_{v\in D}(d(v))= O(n+m)$ time. Therefore, all the computations in the algorithm M-DISJUNCTIVE-PIG can be done in $\sum_{i=1}^{n}(d_{G}(v_i)) + \sum_{v\in D}(d_{G}(v))= O(n+m)$ time.

In view of this, we have the following theorem.

\begin{theorem}
The algorithm M-DISJUNCTIVE-PIG can be implemented in $O(n+m)$ time and hence MDDP can be solved in $O(n+m)$ time in proper interval graphs.
\end{theorem}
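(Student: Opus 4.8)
The plan is to verify that the algorithm M-DISJUNCTIVE-PIG is merely an efficient reimplementation of DISJUNCTIVE-PIG together with a careful accounting of the running time; correctness then follows from the theorem already established for DISJUNCTIVE-PIG. First I would argue the equivalence of the two algorithms. The only nontrivial point is that the conditional tests ``$N_G[v_i]\cap D\neq\emptyset$'' and ``$|N_G^2(v_i)\cap D|\geq 2$'' used in DISJUNCTIVE-PIG are faithfully replaced, via the Lemma above, by the tests ``$D_{count}[v_i]\neq 0$'' and ``$D_{count}[Max[v_i]]+D_{count}[Min[v_i]]\geq 2$''. Here one uses the structural fact about a BCO of a proper interval graph that the neighbourhood of $v_i$ is an ``interval'' $\{Min[v_i],\ldots,Max[v_i]\}$ in $\alpha$, so that $N_G^2(v_i)$ is covered by $N_G[Max[v_i]]\cup N_G[Min[v_i]]$ minus $N_G[v_i]$, and when $N_G[v_i]\cap D=\emptyset$ these two contributions are disjoint — giving part (ii) of the Lemma. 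Likewise I would check that the inner loop over $s=i+1,\ldots,j-1$ with the test ``$s<a$ and $b<c$'', where $v_a=Min[v_k]$, $v_b=Max[v_r]$, $v_c=Min[v_s]$, is exactly the search for the least-index $v_s\in S$ with $d_G(v_s,v_k)=2$ and $d_G(v_s,v_r)>2$ (Subcase~3.2), and that falling through to $s=j$ corresponds precisely to Subcase~3.1; again this rests on the interval-ordering property, under which $d_G(u,v)=2$ iff $u,v$ are nonadjacent but their neighbourhood-intervals in $\alpha$ overlap.

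Next I would bound the running time. Computing a BCO takes $O(n+m)$ by the cited result of \cite{panda}. Computing $Min[v_i]$ and $Max[v_i]$ costs $O(d_G(v_i))$ per vertex by one scan of the adjacency list, hence $O(n+m)$ total. The array $B[1,\ldots,n]$ of linked lists, where $B[i]$ holds the vertices of $N_G[v_i]\cap D$, lets us read off a vertex of $N_G[v_t]\cap D$ in $O(1)$; it is maintained by, whenever some $v_j$ is added to $D$, inserting $v_j$ into $B$ of $v_j$ and of each neighbour of $v_j$, at cost $O(d_G(v_j))$, so the cumulative cost is $\sum_{v\in D}d_G(v)=O(n+m)$. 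Similarly every update of $D_{count}$ triggered by adding $v$ to $D$ costs $O(d_G(v))$, again $\sum_{v\in D}d_G(v)=O(n+m)$ overall. The main loop performs, outside these charged operations, only $O(1)$ work per iteration $i$ — except for the inner loop over $S$. I would observe that the inner loop either breaks at the first $s$ with $s<a$ and $b<c$, or runs through all of $S=\{v_{i+1},\ldots,v_{j-1}\}$; in the latter case iteration $i$ ``consumes'' the indices $i+1,\ldots,j-1$, and since the next iteration that does real work starts at index $\geq j$ (the newly added vertex disjunctively dominates all of $V_j$), the ranges scanned by successive executions of the inner loop are essentially non-overlapping, so the total inner-loop cost is $O(n)$. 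Summing, the algorithm runs in $O(n+m)$ time.

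The main obstacle, I expect, is the amortized analysis of the inner loop: one must confirm that after a full pass over $S$ (Subcase~3.1, adding $v_k$) or after a break (Subcase~3.2, adding $Max[v_s]$), the index at which the algorithm next needs to ``look for work'' has advanced past the scanned region, so that no index of $\alpha$ is examined by the inner loop more than a constant number of times. This is where part~(a) of the already-proved theorem — that $D_i$ disjunctively dominates $V_i$ — does the real work, since it guarantees that once $v_k$ (with $k=Max[Max[v_i]]$) or $Max[v_s]$ is placed in $D$, every vertex with index up to roughly $j-1$ is handled and the subsequent iterations immediately fall into Case~1 until the index exceeds the scanned block. Once that telescoping is pinned down, combining it with the $O(n+m)$ bounds for the $B$-list and $D_{count}$ maintenance and the $O(n+m)$ BCO computation yields the claimed $O(n+m)$ bound, and hence MDDP is solvable in $O(n+m)$ time on proper interval graphs.
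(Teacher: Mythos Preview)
Your proposal is correct and matches the paper's approach on the preprocessing (BCO, $Min$/$Max$ arrays), the $B$-list maintenance, and the $D_{count}$ updates, all charged to $\sum_{v\in D} d_G(v)=O(n+m)$.

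The one substantive divergence is in how you handle the inner loop over $S=\{v_{i+1},\ldots,v_{j-1}\}$. You set up an amortized argument that the ranges scanned in successive activations are essentially disjoint, which you correctly flag as the ``main obstacle''. The paper bypasses this entirely with the direct observation that, since $v_j=Max[v_i]$ and neighbourhoods in a BCO are intervals, every vertex of $S$ is a neighbour of $v_i$, so $|S|\leq d_G(v_i)$; summing over $i$ gives $\sum_{i=1}^n d_G(v_i)=2m$, and the total running time is $\sum_{i} d_G(v_i)+\sum_{v\in D} d_G(v)=O(n+m)$ with no amortization needed. Your telescoping argument does go through (after adding $v_k$ in Subcase~3.1 or $Max[v_s]$ in Subcase~3.2, one can check each vertex of $S$ falls into Case~1 on its own iteration), but it is considerably more work to verify than the one-line degree bound, and it yields $O(n)$ where the paper's bound is $O(m)$---either suffices inside $O(n+m)$.
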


\section{NP-completeness}\label{sec:NPC}
In this section, we prove that DDDP is NP-complete for chordal graphs. For that, we first show that DDDP is NP-complete for GC graphs. To prove this NP-completeness result, we use a reduction from another variant of domination problem, namely \emph{$2$-domination problem}. For a graph $G=(V,E)$, a set $D_2\subseteq V$ is called \emph{$2$-dominating set} if every vertex $v\in V\setminus D_2$ has at least two neighbors in $D_2$. Given a positive integer $k$ and a graph $G=(V,E)$, the \textsc{$2$-domination Decision Problem} ($2$DDP) is to decide whether $G$ has a $2$-dominating set of cardinality at most $k$. It is known that 2DDP is NP-complete for chordal graphs \cite{jacobson}. The following lemma shows that DDDP is NP-complete for GC graphs.

\begin{lemma}\label{lem:DDDP_GC}
DDDP is NP-complete for GC graphs.
\end{lemma}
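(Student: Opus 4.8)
The plan is to give a polynomial-time reduction from 2DDP on chordal graphs, which is NP-complete by~\cite{jacobson}, to DDDP on GC graphs. Given an instance $(G,k)$ of 2DDP with $G=(V,E)$ chordal, $V=\{v_1,\dots,v_n\}$, I would construct the GC graph $G'=(V',E')$ obtained from $G$ by attaching a pendant vertex $w_i$ to each $v_i$, as in the definition of a GC graph. Since $G$ is chordal and attaching pendant vertices preserves chordality (a pendant vertex is simplicial, and adding simplicial vertices keeps the graph chordal), $G'$ is chordal. The instance of DDDP will be $(G',k')$ for an appropriate $k'$; the first thing to pin down is the right value, and I expect $k'=k$ to work after the claim below is established, possibly after the observation that one may always assume an optimal disjunctive dominating set of $G'$ avoids the pendant vertices $w_i$ (if $w_i\in D_d$, replace it by $v_i$: this can only help, since $N_{G'}[w_i]\subseteq N_{G'}[v_i]$ and $N_{G'}^2(w_i)\subseteq N_{G'}[v_i]\cup N_{G'}^2(v_i)$, so every vertex disjunctively dominated via $w_i$ is still disjunctively dominated via $v_i$).

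The heart of the argument is the equivalence: $G$ has a $2$-dominating set of size at most $k$ if and only if $G'$ has a disjunctive dominating set of size at most $k$. For the forward direction, if $D_2\subseteq V$ is a $2$-dominating set of $G$, I claim $D_2$ (viewed inside $V'$) is a disjunctive dominating set of $G'$: each $v_i\notin D_2$ has two neighbors in $D_2$, hence is dominated (indeed $2$-dominated within distance $1$); and each pendant $w_i$ has its unique neighbor $v_i$ — if $v_i\in D_2$ then $w_i$ is dominated, and if $v_i\notin D_2$ then $v_i$ has two neighbors $v_j,v_\ell\in D_2$, and these lie at distance exactly $2$ from $w_i$ in $G'$ (the path $w_i v_i v_j$ has length $2$, and $w_i$ is not adjacent to $v_j$), so $w_i$ is disjunctively dominated. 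For the reverse direction, let $D_d$ be a disjunctive dominating set of $G'$ with $|D_d|\le k$, and by the pendant-swap observation assume $D_d\subseteq V$. I must show $D_d$ is a $2$-dominating set of $G$. Take $v_i\notin D_d$. Its pendant $w_i\notin D_d$ must be disjunctively dominated in $G'$: either $w_i$ is adjacent to a vertex of $D_d$ — but its only neighbor is $v_i$, which is not in $D_d$, contradiction — or $w_i$ has at least two vertices of $D_d$ at distance $2$ in $G'$. Every vertex at distance $2$ from $w_i$ in $G'$ is a neighbor of $v_i$ in $G$ (distance-$2$ from $w_i$ means adjacent to $v_i$ and not equal to $w_i$; it cannot be another pendant $w_j$ since $w_j$'s only neighbor is $v_j\ne v_i$, so it lies in $V$ and is a $G$-neighbor of $v_i$). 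Hence $v_i$ has at least two neighbors in $D_d\cap V = D_d$ in $G$, so $D_d$ is a $2$-dominating set.

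The main obstacle, and the point needing the most care, is precisely this last neighborhood bookkeeping in $G'$: one must verify that $N_{G'}^2(w_i)=N_G(v_i)$ exactly — no spurious distance-$2$ vertices appear (the pendants are the danger) and nothing is lost — and symmetrically that the pendant-elimination swap is valid, i.e. that replacing $w_i$ by $v_i$ in any $D_d$ never destroys the disjunctive domination of any vertex. Both reduce to the elementary fact that $N_{G'}[w_i]=\{v_i,w_i\}$, and I would state these as small observations before the equivalence. Membership of DDDP in NP is immediate: given a candidate set of size at most $k$, one checks in polynomial time, for each vertex outside it, adjacency to the set and the count of set-vertices at distance $2$ (distances are computable in polynomial time). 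Combining membership in NP with the reduction, DDDP is NP-complete for GC graphs; and since GC graphs constructed this way are chordal, this simultaneously yields NP-completeness for chordal graphs, which is the theorem's ultimate target.
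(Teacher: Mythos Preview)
Your proposal is correct and follows essentially the same approach as the paper: reduce from \textsc{$2$-Domination} to DDDP via the pendant construction, use the pendant-swap to assume $D_d\subseteq V$, and then argue that disjunctive domination of each $w_i$ forces $v_i$ to have two $G$-neighbors in $D_d$. The only cosmetic difference is that the paper reduces from 2DDP on general graphs (using the chordal NP-completeness of 2DDP only later to deduce the chordal result), whereas you start directly from chordal instances; both are valid and yield the same conclusions.
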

\begin{proof}
Clearly, DDDP is in NP for GC graphs. To prove the NP-hardness, we give a polynomial transformation from 2DDP for general graphs. Let $G=(V,E)$ and $k$ be an instance of 2DDP. Given a graph $G=(V,E)$ where $V=\{v_{1},v_{2},\ldots,v_{n}\}$, we construct the graph $G'=(V',E')$ in the following way: $V'=V\cup \{w_{i}\mid 1\leq i \leq n\}$ and $E'=E\cup \{v_{i}w_{i}\mid 1\leq i \leq n\}$. Clearly $G'$ is a GC graph and it can be constructed from $G$ in polynomial time.

The following claim is enough to complete the proof of the theorem.

%\noindent\textbf{Claim 1}
\begin{claim}
 $G$ has a $2$-dominating set of cardinality at most $k$ if and only if $G'$ has a disjunctive dominating set of cardinality at most $k$.
\end{claim}
\begin{proof}
(Proof of the claim)
Let $D_2$ be a $2$-dominating set of $G$ of cardinality at most $k$. Clearly $D_2$ is a disjunctive dominating set of $G'$. Because every $v_i\in V'$ either is in $D_2$ or dominated by at least two vertices of $D_2$ and every $w_i\in V'$ is either dominated by $v_i\in D_2$ or contains at least two vertices from $D_2$ at a distance of two. Hence, $G'$ has a disjunctive dominating set of cardinality at most $k$.

Conversely, suppose that $D_d$ is a disjunctive dominating set of $G'$ of cardinality at most $k$. Note that, every vertex of $G'$ is either a pendant vertex or a support vertex. Also, the vertex set of graph $G$ is exactly the set of all support vertices of $G'$. Let $P$ be the set of pendant vertices of graph $G'$, i.e., $P=\{w_i\mid 1\leq i\leq n\}$. If a pendant vertex $w_{i}\in D_{d}$, then the set $D'_{d}=(D_{d}\setminus\{w_{i}\})\cup \{v_{i}\}$ still remains a disjunctive dominating set of $G'$ of cardinality at most $k$. So, without loss of generality we assume that $D_{d}\cap P=\emptyset$. Now for every vertex $v_{i}\in V$, either $v_{i}\in D_{d}$ or $|N_{G}(v_{i})\cap D_{d}|\geq 2$. If not, let there is a vertex $v_i\in V\setminus D_d$ such that $|N_{G}(v_{i})\cap D_{d}|\leq 1$. This implies that the vertex $w_i\in V'$ is neither dominated nor has at least two vertices from $D_d$ at a distance of two, contradicting the fact that $D_d$ is a disjunctive dominating set of $G'$. Hence, $D_d$ is a $2$-dominating set of $G$ of cardinality at most $k$.
\end{proof}

Hence, it is proved that DDDP is NP-complete for GC graphs.
\end{proof}

It is easy to observe that, if the graph $G$ is chordal, then the constructed graph $G'$ in Lemma \ref{lem:DDDP_GC} is also chordal. Hence, we have the main result of this section as a corollary.
\begin{theorem}
DDDP is NP-complete for chordal graphs.
\end{theorem}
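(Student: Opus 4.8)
The plan is to obtain the theorem as an immediate consequence of Lemma~\ref{lem:DDDP_GC} together with a closure property of chordal graphs under the GC construction. First I would observe that the decision problem DDDP lies in NP for chordal graphs, since a candidate disjunctive dominating set $D_d$ of size at most $k$ can be verified in polynomial time by computing, for each $v\in V\setminus D_d$, whether $N_G[v]\cap D_d\neq\emptyset$ or $|N_G^2(v)\cap D_d|\geq 2$; this only requires a breadth-first search from each vertex. So the whole content is NP-hardness.

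For NP-hardness I would reuse the reduction of Lemma~\ref{lem:DDDP_GC} verbatim: starting from an instance $(G,k)$ of 2DDP, where 2DDP is known to be NP-complete on \emph{chordal} graphs \cite{jacobson}, build the GC graph $G'=(V',E')$ with $V'=V\cup\{w_i\mid 1\leq i\leq n\}$ and $E'=E\cup\{v_iw_i\mid 1\leq i\leq n\}$. The key additional step, not needed in Lemma~\ref{lem:DDDP_GC}, is to check that if $G$ is chordal then $G'$ is chordal. This is where the (very mild) work is: any cycle $C$ in $G'$ of length at least four cannot pass through a pendant vertex $w_i$, because $w_i$ has degree one, so $C$ is entirely contained in $G$; since $G$ is chordal, $C$ has a chord, and that chord is also an edge of $G'$. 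Hence $G'$ is chordal. Equivalently, one can note that attaching pendant vertices never creates an induced cycle of length $\geq 4$, so the class of chordal graphs is closed under this operation; a clean way to say it is that adding a simplicial vertex of degree one preserves the existence of a perfect elimination ordering.

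With $G'$ chordal, the Claim inside Lemma~\ref{lem:DDDP_GC} already gives that $G$ has a $2$-dominating set of size at most $k$ if and only if $G'$ has a disjunctive dominating set of size at most $k$, and the construction of $G'$ from $G$ is clearly polynomial. Combining the NP-completeness of 2DDP on chordal graphs \cite{jacobson} with this reduction yields that DDDP is NP-hard on chordal graphs, and together with membership in NP we conclude that DDDP is NP-complete for chordal graphs. I do not anticipate any genuine obstacle here; the only thing to be careful about is stating the chordality-preservation argument cleanly (the pendant-vertex/simplicial-vertex observation) rather than leaving it as ``easy to observe,'' and making sure that the reduction is from a version of 2DDP that is already known to be hard on chordal inputs, which is exactly what \cite{jacobson} provides.
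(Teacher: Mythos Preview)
Your proposal is correct and matches the paper's approach exactly: the paper derives the theorem as an immediate corollary of Lemma~\ref{lem:DDDP_GC} together with the observation that the GC construction preserves chordality, using the NP-completeness of 2DDP on chordal graphs from \cite{jacobson}. You have simply made the chordality-preservation argument more explicit than the paper (which leaves it as ``easy to observe''), but the route is identical.
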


\section{Approximation results}\label{sec:approx}

\subsection{Approximation algorithm}\label{ssec:approxalgo}
In this subsection, we propose a  $(\ln(\Delta^{2}+\Delta+2)+1)$-approximation algorithm for MDDP. Our algorithm is based on the reduction from MDDP to the \textsc{Constrained Multiset Multicover} (CMSMC) problem. We first recall the definition of the \textsc{Constrained Multiset Multicover} problem.

Let $X$ be a set and $\mathcal{F}$ be a collection of subsets of $X$. The \textsc{Set Cover} problem is to find a smallest sub-collection, say $\mathcal{C}$ of $\mathcal{F}$, such that $\mathcal{C}$ covers all the elements of $X$, that is, $\cup_{S\in \mathcal{C}}S=X$. The \textsc{Constrained Multiset Multicover} problem is a generalization of the \textsc{Set Cover} problem. In this problem, $\mathcal{F}$ is the collection of multisets of $X$, that is, each element $x\in X$ occurs in a multiset $S\in \mathcal{F}$ with arbitrary multiplicity, and each element $x\in X$ has an integer coverage requirement $r_{x}$ which specifies how many times $x$ has to be covered. Note that each set $S\in \mathcal{F}$ is chosen at most once. So, for a given set $X$, a collection $\mathcal{F}$ of multisets of $X$, and integer requirement $r_{x}$ for each $x\in X$, the CMSMC problem is to find a smallest collection $\mathcal{C}\subseteq \mathcal{F}$, such that $\mathcal{C}$ covers each element $x$ in $X$ at least $r_{x}$ times. In the case, when $r_{x}$ is constant for each $x\in X$, then $\mathcal{C}$ is called a $r_{x}$-cover of $X$, and the CMSMC problem is to find a minimum cardinality $r_{x}$-cover of $X$.
\begin{theorem}\label{th53}
The \textsc{Minimum Disjunctive Domination Problem} for a graph $G=(V,E)$ with maximum degree $\Delta$ can be approximated with an approximation ratio of $\ln(\Delta^{2}+\Delta+2)+1$.
\end{theorem}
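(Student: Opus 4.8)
The plan is to reduce MDDP to the \textsc{Constrained Multiset Multicover} problem and then invoke the standard greedy guarantee for CMSMC. Given $G=(V,E)$ with $V=\{v_{1},\ldots,v_{n}\}$, I would take the ground set $X=V$ with coverage requirement $r_{v}=2$ for every $v\in V$, and for each vertex $u\in V$ define a multiset $S_{u}$ over $X$ in which the element $v$ occurs with multiplicity $2$ if $v\in N_{G}[u]$, with multiplicity $1$ if $dist_{G}(u,v)=2$, and with multiplicity $0$ otherwise; let $\mathcal{F}=\{S_{u}\mid u\in V\}$. Once all pairwise distances in $G$ are computed (in polynomial time, e.g.\ by BFS from each vertex), this CMSMC instance is constructed in polynomial time.

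The key structural step is the equivalence: a set $D\subseteq V$ is a disjunctive dominating set of $G$ if and only if $\{S_{u}\mid u\in D\}$ is a $2$-cover of $X$. I would verify this by a short case analysis on a vertex $v$. The total multiplicity with which $v$ is covered by $\{S_{u}\mid u\in D\}$ reaches $2$ exactly when either some $u\in N_{G}[v]\cap D$ already contributes its full multiplicity $2$ --- which is precisely the event ``$v\in D$ or $v$ has a neighbour in $D$'' --- or at least two distinct vertices of $D$ lie at distance $2$ from $v$, each contributing $1$; these are exactly the defining conditions for $v$ to be disjunctively dominated. Since $\mathcal{F}$ is indexed bijectively by the vertices of $G$ and each multiset is chosen at most once, the optimum of the CMSMC instance equals $\gamma_{2}^{d}(G)$; feasibility is immediate because $G$ is connected with at least two vertices, so $S_{v}$ alone already $2$-covers $v$.

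Next I would bound the size, counted with multiplicity, of any multiset $S_{u}$: it is at most $2|N_{G}[u]|+|N^{2}_{G}(u)|\le 2(\Delta+1)+\Delta(\Delta-1)=\Delta^{2}+\Delta+2$, using $|N_{G}[u]|\le\Delta+1$ together with the observation that each of the at most $\Delta$ neighbours of $u$ has at most $\Delta-1$ further neighbours. Invoking the known fact that the greedy algorithm for CMSMC produces a solution within a factor $H(\beta)\le\ln\beta+1$ of the optimum, where $\beta$ is the maximum multiplicity-size of a multiset in $\mathcal{F}$, and translating the returned cover $\{S_{u}\mid u\in D\}$ back to the vertex set $D$, yields a disjunctive dominating set $D$ with $|D|\le(\ln(\Delta^{2}+\Delta+2)+1)\,\gamma_{2}^{d}(G)$, as claimed; the whole procedure runs in polynomial time.

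I expect the part needing the most care to be the correctness of the reduction --- in particular, checking that the asymmetry between distance-$1$ domination (where a single vertex suffices) and distance-$2$ domination (where two are required) is faithfully encoded by the multiplicities $2$ and $1$, and that no vertex's contribution is ever double-counted, which holds because $dist_{G}(u,v)$ is a single well-defined value. The degree-based size bound and the appeal to the greedy CMSMC guarantee are then routine.
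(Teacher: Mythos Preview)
Your proposal is correct and follows essentially the same approach as the paper: the same CMSMC instance (ground set $V$, requirement $2$, multisets with multiplicity $2$ on $N_G[u]$ and $1$ on $N_G^2(u)$), the same equivalence between disjunctive dominating sets and $2$-covers, the same size bound $2(\Delta+1)+\Delta(\Delta-1)=\Delta^2+\Delta+2$, and the same appeal to the greedy CMSMC guarantee. Your write-up is in fact a bit more explicit about feasibility and polynomial-time construction than the paper's, but the argument is the same.
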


\begin{proof}
Let us show the transformation from MDDP to the CMSMC problem.

\noindent \textbf{Construction :} Let $G=(V,E)$ be a graph with $n$ vertices and $m$ edges where $V=\{v_{1},v_{2},\ldots,v_{n}\}$ (an instance of MDDP). Now we construct an instance of the CMSMC problem, that is, a set $X$, a family $\mathcal{F}$ of multisets of $X$, and a vector $R=(r_{x})_{x\in X}$ ($r_{x}$ is a non-negative integer for each $x\in X$) in the following way:\\
% ordered pair $(X,\mathcal{F})$, an instance of the MS2C problem in the following way:
  $X=V$,
   $\mathcal{F}=\{F_{1},F_{2},\ldots,F_{n}\}$, where for each $i$, $1\leq i \leq n$, $F_{i}$ is a multiset which contains two copies of each element in $N_{G}[v_{i}]$ and one copy of the set of elements which are at distance $2$ from the vertex $v_{i}$ in graph $G$,
$r_{x}=2$ for each $x\in X$.

Now we first prove the following correspondence.

\begin{claim} \label{th52}
The set  $D=\{v_{i_{1}},v_{i_{2}},\ldots,v_{i_{k}}\}$ is a disjunctive dominating set of $G$ if and only if $\mathcal{C}=\{F_{i_{1}},F_{i_{2}},\ldots,F_{i_{k}}\}$ is a 2-cover of $X$.
\end{claim}
\begin{proof} (Proof of the claim)
Suppose $D=\{v_{i_{1}},v_{i_{2}},\ldots,v_{i_{k}}\}$  is a disjunctive dominating set of $G$. Let $\mathcal{C}=\{F_{i_{1}},F_{i_{2}},\ldots,F_{i_{k}}\}$, We want to show that $\mathcal{C}$ is a 2-cover of $X$, that is, each element $v\in X$ is 2-covered by $\mathcal{C}$. Consider an arbitrary element $v\in X$. Note that $X=V$. If either $v$ or one of its neighbor belongs to $D$, that is, $v_{i_{r}}\in N_{G}[v]\cap D$, then the set $F_{i_{r}}$ contains 2 copies of $v$, and hence $v$ is 2-covered. If $N_{G}[v]\cap D=\emptyset$, then $|N_{G}^{2}(v)\cap D|\geq 2$. Let $v_{i_{p}},v_{i_{q}}\in N_{G}^{2}(v)\cap D$. Then each $F_{i_{p}}$ and $F_{i_{q}}$ contains a copy of $v$, and hence $v$ is 2-covered. Hence $\mathcal{C}$ is a $2$-cover of $X$.

Conversely, suppose that $\mathcal{C}=\{F_{i_{1}},F_{i_{2}},\ldots,F_{i_{k}}\}$ is a 2-cover of $X$. Let $D=\{v_{i_{1}},v_{i_{2}},\ldots,v_{i_{k}}\}$. We want to show that $D$ is a disjunctive dominating set of $G$. Consider any arbitrary vertex $v\in V$. Then $v\in X$ (as $X=V$). Hence $v$ is 2-covered by $\mathcal{C}$. Then, we have two possibilities: $(i)$ There exists a set $F_{i_{r}}\in \mathcal{C}$, which contains $2$ copies of $v$. In this case, $v_{i_{r}}$ is either $v$ or one of the neighbor of $v$, and hence $v$ is disjunctively dominated by the set $D$. $(ii)$ There exists two sets $F_{i_{p}},F_{i_{q}}\in C$, each containing a copy of $v$. Then $v_{i_{p}}$ and $v_{i_{q}}$ both are at distance $2$ from the vertex $v$. Hence again $v$ is disjunctively dominated by the set $D$. This proves that $D$ is a disjunctive dominating set of $G$.

This completes the proof of the claim.
\end{proof}

By the above claim, if $D_{d}^{*}$ is a minimum cardinality disjunctive dominating set of $G$ and $\mathcal{C^{*}}$ is an optimal $2$-cover of $X$, then $|D_{d}^{*}|=|\mathcal{C^{*}}|$.
In \cite{rajgopalan}, S. Rajgopalan and V. V. Vazirani gave a greedy approximation algorithm for the CMSMC problem, which achieves an approximation ratio of $\ln(|F_{M}|)+1$, where $F_{M}$ is the maximum cardinality multiset in $\mathcal{F}$. Let $\mathcal{C^{*}}$ be an optimal 2-cover and $\mathcal{C}'$ be a 2-cover obtained by greedy approximation algorithm, then $|\mathcal{C}'|\leq (\ln(|F_{M}|)+1)\cdot |\mathcal{C^{*}}|$. Given a 2-cover of $X$, we can also obtain a disjunctive dominating set of graph $G$ of same cardinality. Suppose that $D_{d}'$ is a disjunctive dominating set of $G$ obtained from $2$-cover $\mathcal{C}'$ of $X$. Then $|D_{d}'|\leq (\ln(|F_{M}|)+1)\cdot |D_{d}^{*}|$. If the maximum degree of the graph $G$ is $\Delta$, then the cardinality of a set in family $\mathcal{C}$ will be at most $2(\Delta+1)+\Delta(\Delta-1)$, which is equal to $\Delta^{2}+\Delta+2$. Hence $|D_{d}'|\leq (\ln(\Delta^{2}+\Delta+2)+1)\cdot |D_{d}^{*}|$. This completes the proof of the theorem.
\end{proof}

\subsection{Lower bound on approximation ratio}
To obtain the lower bound, we give an approximation preserving
reduction from the \textsc{Minimum Domination} problem. The following approximation hardness result for the
\textsc{Minimum Domination} problem is already known.

\begin{theorem}\label{th:1}
\cite{chlebik} For a graph $G=(V,E)$, the \textsc{Minimum Domination} problem can not be approximated within $(1-\epsilon) \ln|V|$ for any $\epsilon > 0$ unless NP $\subseteq $ DTIME $(|V|^{O(\log\log|V|)})$.
\end{theorem}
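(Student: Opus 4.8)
The plan is to derive this inapproximability result for \textsc{Minimum Domination} from the corresponding hardness of \textsc{Set Cover}, via an approximation-preserving reduction that turns a \textsc{Set Cover} instance into a \textsc{Minimum Domination} instance on a split graph while keeping both the objective value and the instance size tightly controlled. I would start from Feige's theorem: for every $\epsilon>0$, no polynomial-time algorithm approximates \textsc{Set Cover} within a factor $(1-\epsilon)\ln N$, where $N$ is the number of elements of the ground set, unless NP $\subseteq$ DTIME$(N^{O(\log\log N)})$. The whole argument is then an efficient, value-preserving translation between the two problems.

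First I would set up the reduction. Given a \textsc{Set Cover} instance with ground set $U$ (of size $N$) and a collection $\mathcal{S}=\{S_{1},\ldots,S_{M}\}$ of subsets covering $U$, I build a graph $G=(V,E)$ whose vertices are one $a_{S}$ for each $S\in\mathcal{S}$ and one $b_{u}$ for each $u\in U$. I make the set-vertices $\{a_{S}\}$ into a clique, and I join $a_{S}$ to $b_{u}$ exactly when $u\in S$; since the element-vertices $\{b_{u}\}$ form an independent set, $G$ is a split graph. The key correspondence is that $\gamma(G)$ equals the size of a minimum set cover. In one direction, the set-vertices of any cover $\mathcal{C}$ dominate every other set-vertex through the clique and every element-vertex because $\mathcal{C}$ covers $U$. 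In the other direction, any dominating set can be pushed, without increasing its cardinality, to one using only set-vertices: an element-vertex $b_{u}$ lying in the set is swapped for some $a_{S}$ with $u\in S$, and because the set-vertices form a clique this swap preserves domination. The surviving set-vertices then index a set cover of the same cardinality. Hence a $\rho$-approximate dominating set yields a $\rho$-approximate set cover and conversely.

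Finally I would combine the two pieces. If \textsc{Minimum Domination} admitted a polynomial-time $(1-\epsilon)\ln|V|$-approximation, then running it on $G$ and reading off the corresponding set cover would approximate \textsc{Set Cover} within $(1-\epsilon)\ln|V|=(1-\epsilon)\ln(M+N)$, contradicting Feige's bound once the logarithms can be matched. The delicate point, which I expect to be the main obstacle, is exactly this parameter bookkeeping: to contradict the $(1-\epsilon')\ln N$ lower bound one needs $\ln(M+N)=(1+o(1))\ln N$, i.e.\ the number of sets $M$ must be only $N^{1+o(1)}$, so that the ratio measured against $|V|$ and against $N$ agree up to a factor arbitrarily close to $1$; a careless reduction with $M=N^{c}$ for $c>1$ only yields a $(1-\epsilon)c\ln N$ factor and fails to reach the tight threshold. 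I would therefore invoke hard \textsc{Set Cover} instances in which the set and element counts are polynomially (indeed quasi-polynomially) related, as in the explicit parameters of Feige's partition-system construction, or else apply a sparsification step that reduces the number of sets to $N^{1+o(1)}$ while preserving the cover number. With $M+N=N^{1+o(1)}$, choosing $\epsilon'$ slightly larger than $\epsilon$ converts the assumed domination approximation into a forbidden set-cover approximation and forces NP $\subseteq$ DTIME$(|V|^{O(\log\log|V|)})$, completing the argument.
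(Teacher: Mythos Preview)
The paper does not give any proof of this theorem: it is stated purely as a quoted result from \cite{chlebik}, with no argument supplied. There is therefore nothing in the paper to compare your proposal against; the authors simply invoke the known hardness of \textsc{Minimum Domination} as a black box and then use it, in the subsequent theorem, to derive the analogous lower bound for MDDP.

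That said, your sketch is along the standard lines by which this result is established in the literature: reduce \textsc{Set Cover} to \textsc{Minimum Domination} on a split graph so that optimum values coincide, and then transfer Feige's $(1-\epsilon)\ln N$ lower bound. You correctly isolate the only genuinely delicate step, namely controlling $|V|=M+N$ so that $\ln|V|=(1+o(1))\ln N$; without this the reduction loses a constant factor in the logarithm and fails to reach the tight threshold. In Feige's construction (and in the versions used by Chleb\'{\i}k--Chleb\'{\i}kov\'{a}) the hard instances indeed satisfy $M=N^{1+o(1)}$, so the argument goes through as you outline. Thus your proposal is a valid proof plan for the cited result, but it is not something the present paper attempts or needs to carry out.
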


\begin{theorem}
For a graph $G=(V,E)$, MDDP can not be approximated within $(1-\epsilon) \ln|V|$ for any $\epsilon > 0$ unless NP $\subseteq$ DTIME$(|V|^{O(\log\log|V|)})$.
\end{theorem}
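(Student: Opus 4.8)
The plan is to give an approximation‑preserving reduction from the \textsc{Minimum Domination} problem to MDDP and then invoke Theorem~\ref{th:1}. Given an instance $G=(V,E)$ of \textsc{Minimum Domination} with $|V|=n$, I would build, in polynomial time, a graph $G'=(V',E')$ with $|V'|=O(n)$, so that $\ln|V'|=(1+o(1))\ln n$, together with two properties: (i) $\gamma_2^d(G')\le\gamma(G)$, i.e. every dominating set of $G$ (viewed inside $G'$) already disjunctively dominates $G'$; and (ii) there is a polynomial‑time procedure that turns any disjunctive dominating set $D_d$ of $G'$ into a dominating set of $G$ of cardinality at most $|D_d|$. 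Properties (i) and (ii) together transfer approximation ratios verbatim: a $\rho$‑approximate disjunctive dominating set $D_d$ of $G'$ satisfies $|D_d|\le\rho\,\gamma_2^d(G')\le\rho\,\gamma(G)$, and the recovered dominating set of $G$ has size at most $|D_d|\le\rho\,\gamma(G)$, hence is a $\rho$‑approximation for \textsc{Minimum Domination}.

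The construction attaches to $G$ a small (constant‑size per vertex, hence total $O(n)$) gadget whose purpose is to force a minimum disjunctive dominating set of $G'$ to behave like an ordinary dominating set on the copy of $G$. The point is to neutralise the ``two vertices at distance two'' alternative in the definition of disjunctive domination: the gadget should contain vertices that have only one vertex within distance two, so that they can only be covered by their unique close neighbour and never by the distance‑two option, while remaining ``free'' in the sense that every gadget vertex is automatically disjunctively dominated by any set that dominates the copy of $G$. Verifying (i) and (ii) then reduces to a case analysis on where the vertices of a solution lie, in the same spirit as the exchange arguments used in Section~\ref{sec:polyalgo}.

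Having (i) and (ii) in hand, the conclusion is immediate: if MDDP admitted a polynomial‑time $(1-\epsilon)\ln|V'|$‑approximation for some fixed $\epsilon>0$, then, since $|V'|=O(n)$ gives $(1-\epsilon)\ln|V'|\le(1-\epsilon')\ln n$ for some $\epsilon'>0$ and all large $n$, property (ii) would produce a polynomial‑time $(1-\epsilon')\ln n$‑approximation for \textsc{Minimum Domination}, contradicting Theorem~\ref{th:1} unless NP $\subseteq$ DTIME$(|V|^{O(\log\log|V|)})$. Scaling $\epsilon$ appropriately then yields the stated bound for MDDP.

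The main obstacle is designing the gadget so that three requirements hold simultaneously: (a) it kills the distance‑two escape route, making disjunctive domination collapse to ordinary domination on the vertices that matter; (b) it does not inflate the optimum, i.e. every gadget vertex is covered for free once the copy of $G$ is dominated, so that the additive cost of the gadget is exactly $0$; and (c) it keeps $|V'|$ linear in $n$, which is essential for preserving the $\ln|V|$ factor in the inapproximability bound. Requirements (a) and (b) are in tension — the very feature that defeats the distance‑two option (a gadget vertex seeing only one vertex within distance two) tends to force that vertex's neighbour into every solution — so the crux is to route this forced neighbour through a vertex that a dominating set of $G$ would have to pay for anyway.
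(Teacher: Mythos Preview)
Your overall framework---an approximation-preserving reduction from \textsc{Minimum Domination} with a gadget of total size $O(n)$, followed by an appeal to Theorem~\ref{th:1}---is exactly the paper's approach. The gap is in your requirement (b): you insist that the gadget contributes additive cost \emph{exactly zero}, i.e.\ $\gamma_2^d(G')\le\gamma(G)$, and your proposed resolution of the tension between (a) and (b) is to ``route this forced neighbour through a vertex that a dominating set of $G$ would have to pay for anyway.'' This cannot work: in a general graph $G$ there is no vertex that belongs to every minimum dominating set, so there is no vertex of $G$ on which you can anchor the gadget's forced choice without sometimes paying extra. Your own analysis of the tension is correct; the conclusion you draw from it is not.

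The paper sidesteps this by relaxing (b) to an additive $+1$. Concretely, it attaches to each $v_i$ a path $v_iw_iz_i$, identifies all the $z_i$ at a common vertex $p$, and hangs a pendant $q$ off $p$. The pendant forces $p$ (or something equivalent) into every solution, and once $p$ is paid for, each $w_i$ already has one distance-two neighbour in the solution; the second distance-two neighbour, or a genuine neighbour, must then come from $N_G[v_i]$, which is precisely ordinary domination of $v_i$. Thus $\gamma_2^d(H)\le\gamma(G)+1$ and any disjunctive dominating set of $H$ yields a dominating set of $G$ of no larger size. The additive $+1$ is then absorbed by the standard device of brute-forcing instances with optimum at most a fixed constant $l$ and choosing $l>1/\epsilon$, so that the multiplicative blow-up $(1+1/|D^*|)<(1+\epsilon)$ is swallowed into the $(1-\epsilon)$ factor. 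Your sketch omits this mechanism entirely; adding it (and dropping the zero-overhead requirement) gives the paper's proof.
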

\begin{proof}
Let us describe the reduction from the \textsc{Minimum Domination} problem to MDDP. Let $G=(V,E)$, where $V=\{v_{1},v_{2},\ldots,v_{n}\}$ be an instance of the \textsc{Minimum Domination} problem. Now, we construct a graph $H=(V_{H},E_{H})$ an instance of MDDP in the following way:
$V_{H}=V\cup \{w_{i},z_{i}\mid 1\leq i \leq n\}\cup \{p,q\}$, $E_{H}=E\cup \{v_{i}w_{i},w_{i}z_{i},z_{i}p\mid 1\leq i \leq n\}\cup \{pq\}$.

Fig.~\ref{fig:2} illustrates the construction of the graph $H$ from a given graph $G$. Note that $|V_{H}|=3|V|+2$.
\begin{figure}[H]
  % Requires \usepackage{graphicx}
  \begin{center}
  \includegraphics[width=6cm, height=3.5cm]{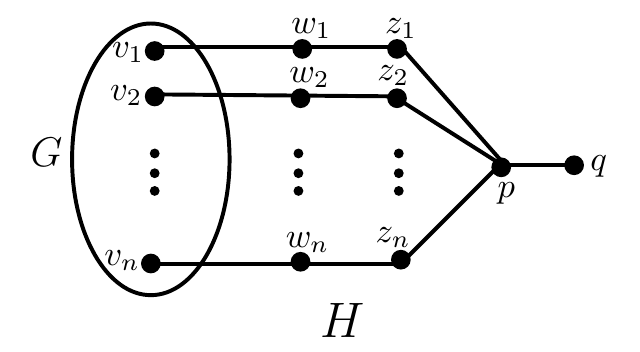}\\
 \caption{An illustration to the construction of $H$ from $G$}
\label{fig:2}
\end{center}
\end{figure}

If $D^{*}$ is a minimum cardinality dominating set of $G$, then $D^{*}\cup \{p\}$ is a disjunctive dominating set of $H$. Hence for a minimum cardinality disjunctive dominating set $D_{d}^{*}$ of $H$, $|D_{d}^{*}|\leq |D^{*}|+1$.

On the other hand, let $D_{d}$ be a disjunctive dominating set of $H$. Consider the vertex $w_{i}$. Since $w_{i}$ is disjunctively dominated by the set $D_{d}$, one of the following possibilities may occur:

\noindent $(i)$ $v_{i}\in D_{d}$, $(ii)$ $w_{i}\in D_{d}$ or $z_{i}\in D_{d}$, $(iii)$ $|N^{2}_{H}(w_{i})\cap D_{d}|\geq 2$, that is, $p\in D_{d}$ and $N_{G}(v_{i})\cap D_{d}\neq \emptyset$.

If $(ii)$ occurs, then define $D_{d}=(D_{d}\setminus \{w_{i},z_{i}\})\cup \{v_{i}\}$. Do it for all $i$, $1\leq i \leq n$. Note that the set $D=D_{d}\cap V$ dominates all the vertices of $G$, and $|D|\leq |D_{d}|$.

Now suppose that MDDP can be approximated with an approximation ratio of $\alpha$, where $\alpha=(1-\epsilon)\ln(|V_{H}|)$ for some fixed $\epsilon>0$, by a polynomial time approximation algorithm APPROX-DISJUNCTIVE. Let $l$ be a fixed positive integer. Consider the following algorithm to compute a dominating set of a given graph $G$.

\begin{algorithm}[H]
\caption{APPROX-DOMINATION(G)}
 \textbf{Input:} A graph $G=(V,E)$.\\
\textbf{Output:} A dominating set $D$ of graph $G$.\\
\Begin{
 \eIf {there exists a minimum dominating set $D'$ of cardinality $\leq l$}{
 $D=D'$;}
{
Construct the graph $H$;\\
Compute a disjunctive dominating set $D_{d}$ of $H$ using the algorithm \textsc{APPROX-DISJUNCTIVE};\\
\For{$i=1:m$}{
 \If{$w_{i}\in D_{d}$ or $z_{i}\in D_{d}$}
 {
 $D_{d}=(D_{d}\setminus \{w_{i},z_{i}\})\cup \{v_{i}\}$;
 }
 }
% each $i$, $1\leq i \leq m$, if $w_{i}\in D_{d}$ or $z_{i}\in D_{d}$, define $D_{d}=(D_{d}\setminus \{w_{i},z_{i}\})\cup \{v_{i}\}$; \\
$D=D_{d}\cap V$;
}
return $D$;
}
\end{algorithm}

Clearly, the algorithm APPROX-DOMINATION outputs a dominating set of $G$ in polynomial time. If the cardinality of a minimum dominating set of $G$ is at most $l$, then it can be computed in polynomial time. So, we consider the case, when the cardinality of a minimum dominating set of $G$ is greater than $l$. Let $D^{*}$ denotes a minimum cardinality dominating set of $G$, and $D_{d}^{*}$ denotes a minimum cardinality disjunctive dominating set of $H$. Note that $|D^{*}|>l$.

Let $D$ be the dominating set of $G$ computed by the algorithm APPROX-DOMINATION, then $|D|\leq |D_{d}|\leq \alpha |D_{d}^{*}|\leq \alpha (|D^{*}|+1)=\alpha(1+\frac{1}{|D^{*}|})|D^{*}|<\alpha(1+\frac{1}{l})|D^{*}|$.

Since $\epsilon$ is fixed, there exists a positive integer $l$ such that $\frac{1}{l}<\epsilon$. So, $|D|<\alpha (1+\epsilon)|D^{*}|= (1-\epsilon)(1+\epsilon)\ln(|V_{H}|)|D^{*}|=(1-\epsilon')\ln(|V_{H}|)|D^{*}|$. Since $|V_{H}|=3|V|+1$, and $|V|$ is very large, $\ln(|V_{H}|)\approx \ln (|V|)$. Hence $|D|<(1-\epsilon')\ln(|V|)|D^{*}|$. Hence, the dominating set $D$ computed by the algorithm APPROX-DOMINATION achieves an approximation ratio of $(1-\epsilon')\ln(|V|)$ for some $\epsilon'>0$.

By Theorem \ref{th:1}, if the \textsc{Minimum Domination} problem can be
approximated within a ratio of $(1-\epsilon')\ln(|V|)$, then NP
$\subseteq$ DTIME$(|V|^{O(\log\log |V|)})$. This proves that for a graph $H=(V_{H},E_{H})$,
MDDP can not be approximated within a ratio of $(1-\epsilon)\ln(|V_{H}|)$ unless NP $\subseteq$
DTIME$(|V_{H}|^{O(\log\log |V_{H}|)})$.
\end{proof}

\subsection{APX-completeness}\label{ssec:APX-complete}
In this subsection, we prove that MDDP is APX-complete for bounded degree graphs. To prove this, we need the concept of L-reduction, which is defined as follows.

\begin{definition}
Given two NP optimization problems $F$ and $G$ and a polynomial time
transformation $f$ from instances of $F$ to instances of $G$, we say
that $f$ is an L-reduction if there are positive constants $\alpha$
and $\beta$ such that for every instance $x$ of $F$
\begin{enumerate}
  \item $opt_{G}(f(x)) \leq  \alpha \cdot opt_{F}(x)$.
  \item for every feasible solution $y$ of $f(x)$ with objective value $m_{G}(f(x),y)=c_{2}$
we can in polynomial time find a solution $y'$ of $x$ with
$m_{F}(x,y')=c_{1}$ such that $|opt_{F}(x)-c_{1}| \leq \beta
|opt_{G}(f(x))-c_{2}|$.
\end{enumerate}
\end{definition}

To show the APX-completeness of a problem $\Pi \in $APX, it is enough to show that there is an L-reduction from some APX-complete problem to $\Pi$ \cite{Ausiello1999}.

By Theorem~\ref{th53}, it is clear that MDDP can be approximated within a constant factor for bounded degree graphs. Thus the problem is in APX for bounded degree graphs. To show the APX-hardness of MDDP, we give an L-reduction from the \textsc{Minimum Vertex Cover Problem} (MVCP) for $3$-regular graphs which is known to be APX-complete \cite{AlimontiKann2000}.

\begin{theorem} \label{th5.2}
The \textsc{Minimum Disjunctive Domination Problem} is APX-complete  for bipartite graphs with maximum degree $3$.
\end{theorem}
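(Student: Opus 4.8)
The plan is to exhibit an explicit L-reduction $f$ from MVCP on $3$-regular graphs to MDDP on bipartite graphs of maximum degree $3$. Given a $3$-regular graph $G=(V,E)$ (an instance of MVCP), I would build $H=f(G)$ as a bipartite graph whose vertex set incorporates $V$, the edge set $E$ of $G$ (each edge $e=uv$ becoming a vertex adjacent to $u$ and $v$, which already enforces bipartiteness and keeps degrees bounded), together with a small constant-size gadget attached to each original vertex and/or edge-vertex so that: (i) a vertex cover of $G$ of size $c$ yields a disjunctive dominating set of $H$ of size $c + (\text{linear-in-}|V|\text{ fixed amount})$, and (ii) conversely, from any disjunctive dominating set of $H$ one can in polynomial time extract a vertex cover of $G$ losing only an additive constant per edge-vertex. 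The gadgets must be engineered so that each edge-vertex $e=uv$ is forced to be disjunctively dominated only by choosing $u$ or $v$ into the solution (the distance-$2$ option through $H$ must be blocked or made non-helpful), which is exactly the vertex-cover condition; pendant-path attachments of length $2$ or $3$ (as in the GC-graph and the $(1-\epsilon)\ln|V|$ reductions earlier in the paper) are the natural device, since a pendant at distance $\ge 2$ cannot be disjunctively dominated from outside its own gadget and thus pins down a fixed number of ``forced'' vertices per gadget.

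Concretely, I would first describe $f$ precisely, then verify it stays within the target class: $H$ is bipartite because every original edge is subdivided (replaced by an edge-vertex), the gadget paths are bipartite, and one checks the $2$-colouring extends; $\Delta(H)\le 3$ because in $G$ every vertex has degree $3$, so an original vertex $v$ is adjacent to exactly its $3$ edge-vertices and we must be careful that gadget attachments do not push its degree above $3$ — this may require attaching gadgets to the edge-vertices (which have degree $2$ from the subdivision and hence one free slot) rather than to the original vertices. Next I would prove the first L-reduction inequality $opt_{\mathrm{MDDP}}(H)\le \alpha\cdot opt_{\mathrm{MVCP}}(G)$: since $G$ is $3$-regular, $|E|=\tfrac{3}{2}|V|$ and any vertex cover has size $\ge |E|/3 = |V|/2$, so the additive linear-in-$|V|$ term coming from the forced gadget vertices is bounded by a constant multiple of $opt_{\mathrm{MVCP}}(G)$; combined with the ``$+c$'' this gives a constant $\alpha$. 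Then I would prove the second inequality: given a disjunctive dominating set $D$ of $H$, normalize it (push pendant/gadget choices to their canonical ``good'' positions, as in the earlier proofs — e.g. replacing a chosen pendant by its support vertex never increases the size and preserves feasibility), so that $D$ contains exactly the forced gadget vertices plus a subset $C\subseteq V$, and argue that feasibility of $D$ forces $C$ to be a vertex cover of $G$; this yields $|C| \le |D| - (\text{forced count})$ and $opt_{\mathrm{MDDP}}(H) = opt_{\mathrm{MVCP}}(G) + (\text{forced count})$, giving $|opt_{\mathrm{MVCP}}(G) - |C|| \le |opt_{\mathrm{MDDP}}(H) - |D||$, i.e. $\beta = 1$.

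The main obstacle is the interplay between the two structural constraints and the reduction's correctness: the distance-$2$ ``disjunctive'' escape route is exactly what makes this harder than a plain domination reduction, so the gadget has to simultaneously (a) block any way of disjunctively dominating an edge-vertex $e=uv$ except by selecting $u$ or $v$, (b) keep all degrees $\le 3$ despite $G$ already being $3$-regular (leaving essentially no spare incidences at original vertices), and (c) stay bipartite. Getting a gadget that threads all three needles while keeping the forced-vertex count a clean linear function of $|V|$ is the delicate part; once the gadget is fixed, the two L-reduction inequalities follow by the same normalization-and-counting arguments already used in Lemma~\ref{lem:DDDP_GC} and the previous theorem, using $3$-regularity to convert the additive gadget overhead into a multiplicative bound against $opt_{\mathrm{MVCP}}(G)\ge |V|/2$.
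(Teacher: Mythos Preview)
Your proposal is correct and follows essentially the same route as the paper: an L-reduction from \textsc{Minimum Vertex Cover} on $3$-regular graphs, with a constant-size gadget replacing each edge so that two ``forced'' vertices per edge handle the gadget internally while the edge's ``probe'' vertex can only be disjunctively dominated by picking one of the original endpoints; normalization gives $\beta=1$, and $3$-regularity (so $m\le 3\,|C^{*}|$) gives the multiplicative bound, in the paper $\alpha=7$ via $|D_d^{*}|=|C^{*}|+2m$. The only thing you leave open is the explicit gadget, which you correctly flag as the crux; the paper's gadget $H_i$ has exactly the shape you anticipate (it connects to $v_r$ and $v_s$ through degree-$1$ ports so original vertices stay at degree $3$, and contains two forced vertices $y_i,d_i$ plus a pendant structure that supplies one distance-$2$ witness to the probe vertex $w_i$, so that a second witness must come from $\{v_r,v_s\}$).
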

\begin{proof}
To show the APX-completeness of MDDP, it is enough to construct an L-reduction $f$ from the instances of MVCP to the instances of MDDP. Given a graph $G=(V,E)$, where $V=\{v_{1},v_{2},\ldots,v_{n}\}$, and $E=\{e_{1},e_{2},\ldots,e_{m}\}$, we construct a graph $H=(V_{H},E_{H})$ by replacing each edge $e_{i}=v_{r}v_{s}$ with the gadget $H_i$ as shown in Figure~\ref{fig:1}. Clearly, $H$ is a bipartite graph and maximum degree of $H$ is $3$.

\begin{figure}[h!]
  % Requires \usepackage{graphicx}
  \begin{center}
  \includegraphics[width=5.5cm, height=4.8cm]{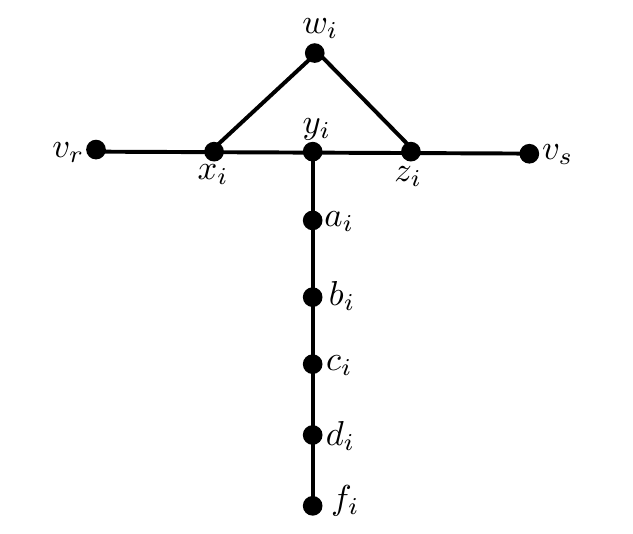}\\
 \caption{Graph $H_{i}$}
\label{fig:1}
\end{center}
\end{figure}

Now, we first prove the following claim:
\begin{claim}\label{cl3}
Let $D_{d}$ be a disjunctive dominating set of $H$ of cardinality at most $k$. Then, there exists a disjunctive dominating set, say $D'_{d}$, of $H$ of cardinality at most $k$ such that $\{y_{i},d_{i}\mid 1\leq i \leq m\}\subseteq D'_{d}$ for each $i\in \{1,2,\ldots, m\}$. In addition, for each edge $e_{i}$ in graph $G$, at least one of the end point of $e_{i}$ is present in $D'_{d}$.
\end{claim}
\begin{proof}(Proof of the claim)
For some $i$, if $d_{i}\notin D_{d}$, then $f_{i}$ must belong to $D_{d}$. In that case, if we remove $f_{i}$ from the set $D_{d}$ and add $d_{i}$ in the set $D_{d}$, then $D_{d}$ still remains a disjunctive dominating set of $H$.

So, we assume that $\{d_{i}\mid 1\leq i \leq m\}\subseteq D_{d}$. Now, to disjunctively dominate the vertex $b_{i}$, at least one vertex from the set $\{c_{i},b_{i},a_{i},y_{i}\}$ must belong to $D_{d}$. If $y_{i}\notin D_{d}$, then remove a vertex from the set $\{c_{i},b_{i},a_{i}\}\cap D_{d}$ from $D_{d}$ and add $y_{i}$ in $D_{d}$. Clearly, $D_{d}$ still remains a disjunctive dominating set of $H$ of same cardinality. Hence, given a disjunctive dominating set, say $D_{d}$, we can always construct a disjunctive dominating set, say $D'_{d}$, such that $\{d_{i},y_{i}\mid 1\leq i \leq m\}\subseteq D'_{d}$ and $|D_{d}'|\leq |D_{d}|$.

Now, we start with a disjunctive dominating set, say $D_{d}$, such that $\{d_{i},y_{i}\mid 1\leq i \leq m\}\subseteq D_{d}$.
Let $S=\{y_{i},d_{i}\mid 1\leq i \leq m\}$, and  $v_{r},v_{s}$ are end points of edge $e_{i}$ in graph $G$. The set $S$ disjunctively dominates all the vertices of $H$ except the $w'_{i}s$. Also, for each $w_{i}$, $S$ contains a vertex which is at distance two from $w_{i}$. Then, to disjunctively dominate the vertex $w_{i}$ in graph $H$, at least one vertex from the set $\{w_{i},x_{i},z_{i},v_{r},v_{s}\}$ must belong to $D_{d}$. Now, if $w_{i},x_{i}$ or $z_{i}$ belong to $D_{d}$, then remove them from $D_{d}$, and add either $v_{r}$ or $v_{s}$ in $D_{d}$. The resultant set $D_{d}$ still remains a disjunctive dominating set of $H$ of same or less cardinality. Note that, for each edge $e_{i}$, one of the endpoint is contained in the modified set $D_{d}$. This completes the proof of the claim.
\end{proof}

\begin{claim} \label{cl4}
$G$ has a vertex cover of cardinality at most $k$ if and only if $H$ has a disjunctive dominating set of cardinality at most $k+2m$.
\end{claim}
\begin{proof}(Proof of the claim)
Let $C$ be a vertex cover of $G$ of cardinality at most $k$. Then, it can be easily verified that $D_d= C\cup \{y_{i},d_{i} \mid 1\leq i \leq m\}$ is a disjunctive dominating set of $H$ of cardinality $k+2m$.

Conversely, suppose that $D_{d}$ is a disjunctive dominating set of $H$ of cardinality at most $k+2m$. Then by Claim~\ref{cl3}, we may assume that $\{y_{i},d_{i}\mid 1\leq i \leq m\}\subseteq D_{d}$, and for each edge $e_{i}$ in graph $G$, at least one of the end point of $e_{i}$ is contained in the set $D_{d}$. Thus, $D_{d}\cap V$ is a vertex cover of $G$ of cardinality at most $k$.
\end{proof}

From Claim~\ref{cl3} and Claim~\ref{cl4}, any disjunctive dominating set $D_d$ of $H$ can be transformed into a vertex cover $C$ of $G$ of cardinality at most $|D_d|-2m$. Let $D^*_d$ be a minimum disjunctive dominating set of $H$ and $C^*$ be a minimum vertex cover of $G$, then $|C^*|=|D^*_d|-2m$. Hence, we have $||C|-|C^{*}||\leq ||D_d|-|D_d^{*}||$. On the other hand, since $G$ is a 3-regular graph, $m\leq 3|V_{c}^{*}|$. Hence $|D_{d}^{*}|=|V_{c}^{*}|+2m\leq 7|V_{c}^{*}|$.

Hence $f$ is an L-reduction with $\alpha=7$ and $\beta=1$.
\end{proof}

%\label{claim:vctodds}

%Due to the construction used in Theorem~\ref{th5.2}, it can be seen that the constructed graph $H$ is bipartite. Hence we have the following corollary.
%\begin{corollary}
%The \textsc{Minimum Disjunctive Domination} problem is APX-complete for
%bipartite graphs with maximum degree $3$.
%\end{corollary}

\section{Conclusion}\label{sec:conclu}

In this article, we have proposed a linear time algorithm for MDDP in proper interval graphs. We have also tightened the NP-completeness of DDDP by showing that it remains NP-complete even in chordal graphs. From approximation point of view, we have proposed an approximation algorithm for MDDP in general graphs and have shown that this problem is APX-complete for bipartite graphs with maximum degree $3$. Note that, the results presented in this paper, can easily be extended to $b$-disjunctive dominating set for $b\geq 3$. It would be interesting to study the complexity of this problem in other graph classes and also the relation between disjunctive domination number and other domination parameters.


\begin{thebibliography}{}


\bibitem{AlimontiKann2000} P. Alimonti and V. Kann,
\newblock Some APX-completeness results for cubic graphs,
\newblock \emph{Theoret Comput Sci} 237(1-2) (2000) 123--134.

\bibitem{ambalath} A. M. Ambalath, R. Balasundaram, C. R. H., V. Koppula, N. Misra, G. Philip, and M. S. Ramanujan,
\newblock On the kernelization complexity of colorful motifs,
\newblock in {\em IPEC, Lecture Notes in Computer Science}, 6478 (2010) 14-–25.

\bibitem{Ausiello1999} G. Ausiello, P. Crescenzi, G. Gambosi, V. Kann, A. Marchetti-Spaccamela, and M. Protasi,
\newblock Complexity and approximation,
\newblock {\em Springer}, Berlin, (1999).


\bibitem{Bertossi} A. A. Bertossi,
\newblock Dominating sets for split and bipartite graphs,
\newblock {\em Inf. Process. Lett.}, 19(1) (1984) 37--40.

\bibitem{chlebik} M. Chleb\'{\i}k and J. Chleb\'{\i}kov\'{a}.
\newblock {Approximation hardness of dominating set problems in bounded degree graphs},
\newblock{\em Inform. and Comput.}, 206 (2008) 1264--1275.

\bibitem{dankelmann} P. Dankelmann, D. Day, D. Erwin, S. Mukwembi, and H. Swart,
\newblock Domination with exponential decay,
\newblock {\em Discrete Math.}, 309 (2009) 5877--5883.


\bibitem{fulkerson} D. R. Fulkerson and O. A. Gross,
\newblock Incidence matrices and interval graphs,
\newblock {\em Pacific J. Math.}, 15 (1965) 835--855.

\bibitem{goddard} W. Goddard, M. A. Henning, and C. A. McPillan,
\newblock The disjunctive domination number of a graph,
\newblock {\em Quaestiones Math.} 37(4) (2014) 547--561.



\bibitem{Haynes1} T. W. Haynes, S. T. Hedetniemi, and P. J. Slater,
\newblock Fundamentals of domination in graphs,
\newblock{\em Marcel Dekker Inc.}, New York, (1998).

\bibitem{Haynes2} T. W. Haynes, S. T. Hedetniemi, and P. J. Slater,
\newblock Domination in Graphs, Advanced Topics,
\newblock{\em Marcel Dekker Inc.}, New York, (1998).

%\bibitem{Hedetniemi} S.M. Hedetniemi, S.T. Hedetniemi, J. Knisely, and D.F. Rall.
%\newblock Secondary domination in graphs,
%\newblock {\em AKCE Int. J. Graphs Comb.}, 5 (2008) 103–-115.
%

\bibitem{Henning2014} M. A. Henning and S. A. Marcon,
\newblock Domination versus disjunctive domination in trees,
\newblock {\em Discrete Appl. Math.}, (2014).

\bibitem{total_disj} M. A. Henning and V. Naicker,
\newblock Disjunctive total domination in graphs,
\newblock {\em J. Comb. Optim.}, (2014) doi:10.1007/s10878-014-9811-4.

\bibitem{jacobson} M. S. Jacobson and K. Peters,
\newblock Complexity questions for $n$-domination and related parameters,
\newblock {\em Eighteenth Manitoba Conference on Numerical Mathematics and Computing (Winnipeg, MB,1988)} Congr. Numer. 68 (1989) 7–22.


\bibitem{jamison} R. E. Jamison and R. Laskar,
\newblock Elimination orderings of chordal graphs,
\newblock In {\em Combinatorics and applications ({C}alcutta, 1982)}, {ISI}, Calcutta, (1984),  192--200.

\bibitem{panda} B. S. Panda and S. K. Das,
\newblock A linear time recognition algorithm for proper interval
graphs, \newblock {\em Inform. Process. Lett.}, 87(3) (2003) 153--161.

\bibitem{rajgopalan} S. Rajgopalan and V. V. Vazirani,
\newblock Primal-dual RNC approximation algorithms for set cover and covering integer programs,
\newblock {\em SIAM J. Comput.}, 28 (1999) 526--541
\end{thebibliography}
\end{document}